\newtheorem{theo}{Theorem}
\newtheorem{defi}[theo]{Definition}
\newtheorem{rema}[theo]{Remark}
\newtheorem{prop}[theo]{Proposition}
\makeatletter\usepackage{microtype}\g@addto@macro\@verbatim{\microtypesetup{act‌​ivate=false}}\makeatother
\renewcommand{\atop}[2]{\genfrac{}{}{0pt}{}{#1}{#2}}
\newcommand{\ii}{\mathrm{i}}
\newcommand{\mZ}{\mathbb{Z}}
\newcommand{\mE}{\mathbb{E}}
\newcommand{\ux}{\underline{x}}
\newcommand{\uD}{\underline{D}}
\newcommand{\cD}{\mathcal{D}}
\newcommand{\cP}{\mathcal{P}}
\numberwithin{equation}{section}
\begin{document}
\begin{center}
	\noindent {\Large \bf The total angular momentum algebra related to\\ the \texorpdfstring{$\mathrm{S}_3$}{S3} Dunkl Dirac equation
	} \\[5mm]
	{\bf Hendrik De Bie${}^{1}$, Roy Oste${}^{2}$, Joris Van der Jeugt${}^{3}$}\\[3mm]
	${}^{1}$Department of Mathematical Analysis, Faculty of Engineering and Architecture,\\ Ghent University, Krijgslaan 281-S8, 9000 Gent, Belgium\\[3mm]
	${}^{2,3}$Department of Applied Mathematics, Computer Science and Statistics, Faculty of Sciences, Ghent University, Krijgslaan 281-S9, 9000 Gent, Belgium\\
\end{center}
E-mail: {\tt Hendrik.DeBie@UGent.be}; {\tt Roy.Oste@UGent.be}; {\tt Joris.VanderJeugt@UGent.be}


\begin{abstract}	
	We consider the symmetry algebra generated by the total angular momentum operators, appearing as constants of motion of the $\mathrm{S}_3$ Dunkl Dirac equation. 
	The latter is a deformation of the Dirac equation by means of Dunkl operators, in our case  
	associated to the root system $A_2$, with corresponding Weyl group  $\mathrm{S}_3$, the symmetric group on three elements.  
	The explicit form of the symmetry algebra in this case is a one-parameter deformation of the classical total angular momentum algebra $\mathfrak{so}(3)$, incorporating elements of $\mathrm{S}_3$. 
	This was obtained using recent results on the symmetry algebra for a class of Dirac operators, containing in particular the Dirac-Dunkl operator for arbitrary root system. 
	For this symmetry algebra, we classify all finite-dimensional, irreducible representations and determine the conditions for the representations to be unitarizable. The class of unitary irreducible representations admits a natural realization acting on a representation space of eigenfunctions of the Dirac Hamiltonian. Using a Cauchy-Kowalevsky extension theorem we obtain explicit expressions for these eigenfunctions in terms of Jacobi polynomials.
\end{abstract}

\section{Introduction}
\label{sec:1}

Our aim is to study the symmetry algebra generated by the total angular momentum operators, as constants of motion of the Dirac equation, when modified by means of Dunkl operators. We will explain in short its context. 
The Dirac equation can be written as 
\begin{equation}\label{DiracEQ}
\ii \hbar \frac{\partial}{\partial t}\psi(x,t) = H \psi(x,t)
\end{equation}
with the Dirac Hamiltonian for a free particle having the form
\begin{equation}\label{DiracHam}
H =  c \sum_{n=1}^3 e_n p_n + e_0 m c^2\rlap{\,.}
\end{equation}
Here, $m$ is the rest mass, $c$ the speed of light, $\hbar$ the reduced Planck constant, and the components of the momentum operator in the coordinate representation are given by
\begin{equation}\label{coordrep}
	p_j = \frac{\hbar}{\ii} \frac{\partial}{\partial x_j}, \qquad j=1,2,3\rlap{\,.}
\end{equation}
For the following, we will adopt the natural convention where $\hbar = c =1$. 
The entities $e_0,e_1,e_2,e_3$ are assumed to satisfy
the anticommutation relations $\{e_i,e_j\} = e_ie_j+e_je_i = 2\delta_{ij}$ for $i,j\in\{0,1,2,3\}$. 
Though usually realized by $4\times4$ matrices, for our purposes, and with higher dimensional generalizations in mind, it suffices to consider $e_0,e_1,e_2,e_3$ as abstract generators of a Clifford algebra. 
Accordingly, the wave function $\psi$ belongs to an appropriate spinor representation space (having four components for the $4\times4$ matrix realization).

Multiplying both sides of equation~\eqref{DiracEQ} by $e_0$ (on the left), we obtain the equivalent form as an eigenvalue equation for the spacetime Dirac operator. Indeed, 
defining $\gamma^j = e_0e_j$ for $j=1,2,3$ and $\gamma^0 = e_0$, one arrives at the anticommutation relations of the Dirac algebra, $\{\gamma^{\mu},\gamma^{\nu}\} = 2 \eta^{\mu\nu}$ for $\mu,\nu\in\{0,1,2,3\}$ with $\eta^{\mu\nu}$ the Minkowski metric (to obtain the opposite sign, append a factor $\ii$ in the definition of $\gamma^{\mu}$). Up to a term proportional to the mass $m$, the Dirac Hamiltonian~\eqref{DiracHam} consists of another Dirac operator, associated to three-dimensional Euclidean space instead of four-dimensional spacetime, and which squares to the Laplace operator on Euclidean space.

In the non-relativistic setting, the Hamiltonian for a free particle is given by
\begin{equation}\label{LaplaceHam}
H_{nr} = \frac{1}{2m} \sum_{n=1}^3  p_n^2 \rlap{\,,}
\end{equation}
which is proportional to the Laplace operator on 3D Euclidean space.
In this case, the angular momentum is a constant of motion. For 
$
L_{ij} = x_ip_j -x_jp_i 
$ with  $i,j\in\{1,2,3\}$, in the Heisenberg picture we have
\[
 i\hbar \frac{\mathrm{d}}{\mathrm{d}t} L_{ij} = [H_{nr},L_{ij}] = 0 \rlap{\,.}
\]
In fact, this holds for any system with a spherically symmetric potential. 
The symmetries $L_{23},L_{31},L_{12}$ of $H_{nr}$ are seen to satisfy the commutation relations of the angular momentum algebra, that is the Lie algebra $\mathfrak{so}(3)$. 

For the Dirac Hamiltonian~\eqref{DiracHam}, the angular momentum generators $L_{ij}$ no longer commute with $H$. Instead, the total angular momentum is conserved, taking into account also spin. Indeed, in this case we have $[H,J_{ij}]=0$ for  
\begin{equation} \label{LS}
J_{ij} = L_{ij} + S_{ij}, \qquad S_{ij} = \frac{\hbar}{2\ii} e_ie_j \quad\text{with } i,j\in\{1,2,3\}\rlap{\,.}
\end{equation}
(Note that when $e_1,e_2,e_3$ are represented in terms of Pauli matrices, one has $e_xe_y = \ii e_z$ for $(x,y,z)$ a cyclic permutation of $\{1,2,3\}$, though this does not hold in the abstract setting.) 
The symmetry algebra of $H$ generated by $J_{23},J_{31},J_{12}$, what one can call the ``total angular momentum algebra'',
is again seen to be the Lie algebra $\mathfrak{so}(3)$. 

The former ``classical'' scenarios can be generalized by means of Dunkl operators~\cite{1989_Dunkl_TransAmerMathSoc_311_167,2003_Rosler}, a generalization of partial derivatives in the form of differential-difference operators associated to a root system, and invariant under its Weyl group $G$. 
These Dunkl operators retain a desirable commutative property, but allow for non-local effects through reflection terms. 
They have seen numerous applications since their introduction in for instance physical models involving reflections \cite{Genest1,Genest2,GLV, Graczyk,OVdJ2,2003_Rosler,2015_Genest&Vinet&Zhedanov_CommMathPhys_336_243,RoslerVoit}.

Recently, Feigin and Hakobyan~\cite{Feigin} considered a deformation of the quantum angular momentum generators by means of Dunkl operators, in the context of Calogero-Moser systems. In fact, using Dunkl operators instead of the coordinate representation momentum operators~\eqref{coordrep} for the free particle Hamiltonian~\eqref{LaplaceHam}, one arrives at the Calogero-Moser Hamiltonian in harmonic confinement~\cite{Feigin}. 
The algebraic relations for the symmetry algebra of this Hamiltonian were determined and seen to constitute a deformation of the standard angular momentum algebra~\cite{Feigin}.  

Our aim is now to study the deformation of the total angular momentum algebra obtained through the use of Dunkl operators as momentum operators in the Dirac Hamiltonian~\eqref{DiracHam}.  
This can be interpreted as the addition of a potential term to $H$, whose form depends on the choice of reflection group or root system.


Given the 3D Dirac operator appearing in $H$, a useful aid in this study will be some recent results on symmetries of Dirac-Dunkl operators, in arbitrary dimension $N$~\cite{Oste}. 
The Dirac-Dunkl operator is obtained by replacing the derivatives $\partial/\partial x_j$ by Dunkl derivatives $\cD_j$ in the Dirac operator 
\begin{equation} \label{DiracOP}
\sum_{j=1}^N e_j \frac{\partial}{\partial x_j}\quad \longrightarrow \quad\sum_{j=1}^N e_j \cD_j \rlap{\,,}
\end{equation} 
and has appeared also in other contexts, e.g.~\cite{2012_DeBie&Orsted&Somberg&Soucek_TransAmerMathSoc_364_3875,2009_Orsted&Somberg&Soucek_AdvApplCliffAlg_19_403}. 
In $N$-dimensional Euclidean space, the system of Dunkl operators (and thus the reflection terms) depends on the choice of a (reduced) root system, or explicitly on the generators of the underlying reflection group $G$. 

In recent work~\cite{DeBie&Genest&Vinet-2016}, the symmetry algebra of the Dirac-Dunkl operator for $N=3$ and $G=(\mathbb{Z}_2)^3$ (and of the associated Dirac equation on the two-sphere) was identified as the so-called Bannai-Ito algebra~\cite{2012_Tsujimoto&Vinet&Zhedanov_AdvMath_229_2123}. 
This lead to the construction of representations of the Bannai-Ito algebra using the actions of the Dunkl operators.  
Moreover, by moving up in dimension a higher rank version of the Bannai-Ito algebra was postulated as the symmetry algebra of the $(\mathbb{Z}_2)^N$ Dirac-Dunkl operator~\cite{DeBie&Genest&Vinet-2016-2}. 
For a recent overview of the Bannai-Ito algebra and its applications, we refer the reader to ref.~\cite{GLV}. 

These results inspired our investigation into the Dunkl version of the Dirac operator for another reflection group, the symmetric group on three elements $\mathrm{S}_3$, associated to the root system $A_2$. 
Doing so, this provided a stepping stone towards the determination of the symmetry algebra for a bigger class of generalized Laplace and Dirac operators in general dimension $N$, in the framework of Wigner systems~\cite{Oste}. 
These results contain in particular the Dunkl versions for arbitrary root system (that is, for arbitrary $N$ and general $G$), though they hold for a more general class of abstract Dirac operators.
Armed with these new tools we now return to the three-dimensional case with the objective of finding representations, and explicit realizations, of these abstract symmetry algebras. 

In three dimensions, the symmetry algebra of such a Dirac operator forms an extension of the classical total angular momentum algebra, the Lie algebra $\mathfrak{so}(3)$. 
The algebraic relations were obtained in abstract form in~\cite{Oste} and are given by
\begin{align}
[ O_{23}, O_{12} ] 
& =  O_{31} +\{O_{123},O_2\} + [O_3,O_1] \notag  \\
[  O_{31}, O_{23} ] 
& = O_{12}  +\{O_{123},O_3\} + [O_1,O_2] \label{Osymalg} \\
[ O_{12},O_{31}  ] 
& = O_{23} +\{O_{123},O_1\} + [O_2,O_3]\notag
\end{align}
where $[A,B] = AB-BA$ and $\{A,B\}=AB+BA$ are respectively the commutator and the anticommutator of $A$ and $B$. 
This algebra, which we will denote by $\mathcal{O}_3$, is generated by seven generally non-trivial elements: $O_1,O_2,O_3,O_{12},O_{23},O_{31},O_{123}$. 
The general expressions of these symmetries are given in~\cite[formulas (3.8) and (3.10)]{Oste}. 
For the classical Dirac operator in terms of the standard partial derivatives, the one-index symmetries $O_1,O_2,O_3$ were seen to be identically zero and thus in this case the commutation relations~\eqref{Osymalg} indeed reduce to those of the Lie algebra $\mathfrak{so}(3)$. 
For other types of Dirac operators, the relations~\eqref{Osymalg} form an extension of $\mathfrak{so}(3)$ whose nature depends on the explicit form of the one-index symmetries $O_1,O_2,O_3$ in particular. 
When dealing with Dunkl operators, the choice of root system and associated reflection group $G$ is what determines the structure and the explicit form of the one-index symmetries $O_1,O_2,O_3$ and as a consequence also of the symmetry algebra, as seen from the right-hand side of the algebraic relations~\eqref{Osymalg}.

For the current paper, this is the root system $A_2$ with Weyl group $G=\mathrm{S}_3$, the symmetric group on three elements. 
We will use the notation $\mathcal{O}\mathrm{S}_3$ to denote the specific form the abstract algebra $\mathcal{O}_3$ takes on in the case of the $\mathrm{S}_3$ Dirac-Dunkl operator. 
The explicit relations of the algebra generators are given in~\eqref{OijOki}. 
From these expressions it is clear that one can speak of a one-parameter deformation of the total angular momentum algebra $\mathfrak{so}(3)$ incorporating the symmetric group $\mathrm{S}_3$.  
When the deformation parameter $\kappa$ is set to zero, one recovers the ordinary $\mathfrak{so}(3)$ algebra, as the Dunkl operators then reduce to regular partial derivatives. 
For non-zero $\kappa$, the algebra relations~\eqref{OijOki} provide an interesting and exciting new structure: a deformation of $\mathfrak{so}(3)$ by means of elements of the $\mathrm{S}_3$ group algebra.
This algebra $\mathcal{O}\mathrm{S}_3$ is the main object of study in this paper.
In particular, we shall classify (finite-dimensional irreducible) representations of this algebra, and provide an explicit realization of a class of representations in terms of orthogonal polynomials.

We briefly elaborate upon other cases which have been considered. 
For the case of $(\mathbb{Z}_2)^3$ Dirac-Dunkl operator, the commutators in the left-hand side of the algebraic relations~\eqref{Osymalg} become anticommutators through the use of commuting involutions present in the reflection group. 
This yields the Bannai-Ito algebra as the symmetry algebra~\cite{DeBie&Genest&Vinet-2016}. 
In more recent work~\cite{GLV}, the root system of type $B_3$ was also considered to define extensions of the Bannai-Ito algebra. 
A crucial ingredient in this paper is again the existence of commuting involutions in the reflection group. 
The lack of such involutions characterizes the case at hand of the symmetric group, and its importance as a reference work for future investigations in higher dimensions. 

In the subsequent section, we go over the definitions and notions required to introduce the Dirac-Dunkl operator related to $\mathrm{S}_3$. 
In section~\ref{sec:3}, we elaborate on the explicit expressions of the symmetries of this operator and give the algebraic relations~\eqref{Osymalg} for this specific case. 
In section~\ref{sec:4}, we construct a form of ladder operators and use them to classify all finite-dimensional irreducible representations of the symmetry algebra in abstract form. 
In the last section we determine explicit expressions for wavefunctions which form a unitary irreducible representation of the symmetry algebra, as realized in the framework of Dunkl operators.

\section{The \texorpdfstring{$\mathrm{S}_3$}{S3} Dunkl Dirac Hamiltonian}
\label{sec:2}

We begin by introducing the concepts needed to define the Dunkl operators we will use to deform the Dirac Hamiltonian~\eqref{DiracHam}.
We consider three-dimensional Euclidean space $\mathbb{R}^3$ with coordinates $x_1,x_2,x_3$. The symmetric group $\mathrm{S}_3$ is generated by the transpositions $g_{12},g_{23},g_{31}$ which act on functions on $\mathbb{R}^3$ in the following way
\[
g_{12} f(x_1,x_2,x_3) =  f(x_2,x_1,x_3) ,\quad  g_{23} f(x_1,x_2,x_3) =  f(x_1,x_3,x_2) , \quad 
g_{31} f(x_1,x_2,x_3) =  f(x_3,x_2,x_1)\rlap{\,.}
\] 
Denoting the two even elements by $g_{123} = g_{12}g_{23} = g_{31}g_{12} = g_{23}g_{31}$ and $g_{321} = g_{23}g_{12} = g_{12}g_{31} = g_{31}g_{23}$, the six elements of $\mathrm{S}_3$ are $ \{ 1, g_{12},g_{23},g_{31},g_{123},g_{321}\}$. For convenience we give the multiplication table of $\mathrm{S}_3$ in Table~\ref{tab:1}.
\begin{table}[h!]
	\caption{Multiplication table of $\mathrm{S}_3$.}
	\label{tab:1}       
	\[
	\begin{array}{ccccccc}
	\quad\nearrow\quad & 1 & g_{12} & g_{23} & g_{31} & g_{123} & g_{321} \\ \hline
	1  & 1 & g_{12} & g_{23} & g_{31} & g_{123} & g_{321} \\
	g_{12} & g_{12}& 1  & g_{123} & g_{321} & g_{23} & g_{31} \\
	g_{23} & g_{23}  & g_{321}& 1  & g_{123} &  g_{31} & g_{12}  \\
	g_{31}  & g_{31} & g_{123} & g_{321} & 1 & g_{12} & g_{23} \\
	g_{123} & g_{123} & g_{31} & g_{12} & g_{23} & g_{321}  & 1\\
	g_{321} & g_{321} & g_{23} & g_{31}  & g_{12} & 1 & g_{123}
	\end{array}
	\]
\end{table}

The symmetric group $\mathrm{S}_3$ arises as the Weyl group of the root system $A_2$. 
The associated Dunkl operators 
are explicitly given by~\cite{1989_Dunkl_TransAmerMathSoc_311_167,2003_Rosler}
\begin{equation}\label{Dunkl}
\begin{split}
\cD_1 & = \partial_{x_1} + \kappa\left( \frac{1-g_{12}}{x_1-x_2} +  \frac{1-g_{13}}{x_1-x_3} \right),
\qquad
\cD_2 = \partial_{x_2} + \kappa\left( \frac{1-g_{12}}{x_2-x_1} +  \frac{1-g_{23}}{x_2-x_3} \right),\\
\cD_3 & = \partial_{x_3} + \kappa\left( \frac{1-g_{31}}{x_3-x_1} +  \frac{1-g_{23}}{x_3-x_2} \right)\rlap{\,.}
\end{split}
\end{equation}
Here the parameter $\kappa$ denotes the value of the multiplicity function on the single conjugacy class all transpositions of the symmetric group share. This multiplicity function is usually taken to be real and non-negative, in order to have some favorable properties such as intertwining operators~\cite{Dunkl2}. 
Now, the property that makes these generalizations of partial derivatives so special is that they commute with one another, $[\cD_i,\cD_j]=0$ for $i,j\in\{1,2,3\}$. 
Moreover, for $i,j,k$ a cyclic permutation of $1,2,3$, the action of $\mathrm{S}_3$ on the Dunkl operators is simply given by   
\[
g_{ij} \cD_i = \cD_j g_{ij}, \quad g_{ij} \cD_j = \cD_i g_{ij}, \quad g_{ij} \cD_k = \cD_k g_{ij}\rlap{\,.}
\]
The commutation relations with the coordinate variables are easily shown to be 
\begin{equation}\label{comrel}
[\cD_i,x_j] = \cD_ix_j-x_j\cD_i = \begin{cases} 1+\kappa\displaystyle\sum_{k\neq i}g_{ik} & i=j \\ -\kappa g_{ij} & i\neq j \end{cases}
\end{equation}
for $i,j,k\in\{1,2,3\}$. Note that when $\kappa=0$ these reduce to the standard relations as the Dunkl operators then reduce to ordinary partial derivatives. 

The Laplace-Dunkl operator is given by the sum of the squares of the coordinate Dunkl operators
\begin{equation}\label{Laplace}
\Delta = (\cD_1)^2+(\cD_2)^2+(\cD_3)^2\rlap{\,,}
\end{equation}
which is obviously invariant under the action of $\mathrm{S}_3$. It is independent of the choice of orthonormal basis of $\mathbb{R}^3$. 
In this setting, the Dirac-Dunkl operator $\underline{D}$ is defined as a square root of the Dunkl Laplacian as follows:
\begin{equation}\label{DiracDunklOp}
\underline{D}= e_1  \cD_1 +  e_2 \cD_2 + e_3  \cD_3\rlap{\,,}
\end{equation}
where $e_1,e_2,e_3$ generate the three-dimensional Euclidean Clifford algebra and satisfy the anticommutation relations $\{e_i,e_j\} = 2\delta_{ij}$ for $i,j\in\{1,2,3\}$. 
The three-dimensional Euclidean Clifford algebra can be realized by means of the well-known Pauli matrices. 
For the first part of this paper, we will work with abstract Clifford elements $e_1,e_2,e_3$. We will use the Pauli matrices for the explicit construction of representation spaces in Section~\ref{sec:5}. 

The deformation of the Dirac Hamiltonian~\eqref{DiracHam} by means of Dunkl operators is given by
\begin{equation}\label{DiracDunklHam}
H_{\kappa}=      \sum_{n=1}^3 e_n  \cD_n + e_0 m =   \sum_{n=1}^3 e_n  \frac{\partial}{\partial x_n} + e_0 m   +  \kappa \sum_{i=1}^3 e_i \left( \frac{1-g_{ij}}{x_i-x_j} +  \frac{1-g_{ik}}{x_i-x_k} \right)\rlap{,}
\end{equation}
where in the last summation $i,j,k$ is a cyclic permutation of $1,2,3$. Note that for notational convenience $\hbar = c = 1$, and we have left out the imaginary unit $\ii$ which is used to make the momentum operators self-adjoint. The self-adjointness is easily recovered by accompanying every Dunkl operator in the following with a factor $1/\ii$ (or $\hbar/\ii$ when $\hbar \neq 1$). We see that the use of Dunkl operators corresponds to the addition of a potential term to the Hamiltonian $H$. 

Before moving on to the symmetries, we will briefly elaborate upon an algebraic structure naturally related to the Dirac operator. 
Together with the vector variable $\underline{x}=e_1  x_1 +  e_2 x_2 + e_3  x_3$, the operator $\underline{D}$ generates a realization of the $\mathfrak{osp}(1|2)$ Lie superalgebra~\cite{2012_DeBie&Orsted&Somberg&Soucek_TransAmerMathSoc_364_3875,MR1773773}, with relations
\begin{equation}\label{osp12}
[ \{ \underline{D},\underline{x}\}, \underline{D}] = -2 \underline{D},\qquad [ \{ \underline{D},\underline{x}\}, \underline{x}] = 2 \underline{x}.
\end{equation}
Here, the anticommutator $\{ \underline{D},\underline{x}\}$ can be written in terms of the Euler operator $\mE$ as follows
\begin{equation}\label{Euler}
\{ \underline{D},\underline{x}\}= 2\mathbb{E}+3+6\kappa\rlap{\,,}\qquad \mathbb{E} = x_1 \partial_{x_1} + x_2 \partial_{x_2} + x_3 \partial_{x_3}\rlap{\,.}
\end{equation} 
The so-called Scasimir operator~\cite{MR1773773} of this $\mathfrak{osp}(1|2)$ realization, given by 
\begin{equation} \label{Scasi}
\Gamma +1 =  \frac12\left([\underline{D},\underline{x}]-1\right) \rlap{,}
\end{equation}
satisfies $\{\Gamma+1,\uD\} = 0$ and $\{\Gamma+1,\ux\}=0$, while commuting with the even elements of $\mathfrak{osp}(1|2)$. It squares to the Casimir element, which generates the center of $\mathfrak{osp}(1|2)$.
The notation $\Gamma$ refers to its appearance in the expression for the Dirac operator in spherical coordinates. This angular Dirac operator $\Gamma$ was the main object of study for the $\mathbb{Z}_2^3$ Dunkl case~\cite{DeBie&Genest&Vinet-2016}.

Working out the commutator in the right-hand side of~\eqref{Scasi} using~\eqref{comrel} we obtain
\begin{equation}\label{Gamma}
\Gamma +1 = 1 + \kappa(g_{12}+g_{23}+g_{31}) - e_1e_2L_{12} - e_2e_3L_{23}- e_3e_1L_{31} \rlap{\,.}
\end{equation}
Here, the Dunkl versions of the angular momentum operators are defined as
\begin{equation}\label{DunklAM}
L_{12} = x_1 \cD_2 - x_2 \cD_1, \quad L_{23} = x_2 \cD_3 - x_3\cD_2, \quad  L_{31} = x_3 \cD_1 - x_1 \cD_3,
\end{equation}
where we have again left out the imaginary unit for notational convenience. In the classical case, for $\kappa=0$, the expression~\eqref{Gamma} for $\Gamma$ is seen to correspond to the spin-orbit interaction $\boldsymbol{L}\cdot\boldsymbol{S}$, with the angular momentum and spin angular momentum given by
\[
\boldsymbol{L} = (L_{23} , L_{31}, L_{12}) \rlap{\,,}\qquad 
 \boldsymbol{S} = \frac{\hbar}{2} ( e_2e_3,e_3e_1,e_1e_2)\rlap{\,.}
\]

Using the property $[\Delta,x_j] = 2 \cD_j$, the Dunkl angular momentum operators~\eqref{DunklAM} are easily shown to commute with the Dunkl Laplacian~\eqref{Laplace}. It is for these operators (and generalizations thereof in dimension $N$) that the ``Dunkl angular momentum algebra'' was determined in ref.~\cite{Feigin}.
In the next section we will present the ``Dunkl total angular momentum algebra.''

\section{Symmetry algebra of the \texorpdfstring{$\mathrm{S}_3$}{S3} Dunkl Dirac Hamiltonian}
\label{sec:3}

The Dirac-Dunkl operator~\eqref{DiracDunklOp} appearing in the Hamiltonian~\eqref{DiracDunklHam} is a special case of a class of generalized Dirac operators for which the symmetry algebra was obtained recently in abstract form~\cite{Oste}. This symmetry algebra, in general, is generated by elements which either commute or anticommute with the Dirac-Dunkl operator. 
The constants of motion commuting with the Dunkl Dirac Hamiltonian $H_{\kappa}$ will follow from these results. 

In three dimensions, the symmetry algebra, denoted by $\mathcal{O}_3$, is governed by the relations~\eqref{Osymalg}. 
It consists of three one-index symmetries $O_1,O_2,O_3$ and a three-index symmetry $O_{123}$ which anticommute with the Dirac operator, and three two-index symmetries $O_{12},O_{23},O_{31}$ which commute with the Dirac operator.
These two-index symmetries will play the role of total angular momentum operators. 
The case at hand is that of the reflection group $\mathrm{S}_3$, with symmetry algebra denoted by $\mathcal{O}\mathrm{S}_3$. We will elaborate upon the explicit form the symmetries and the relations~\eqref{Osymalg} take on for this case. 

For the $\mathrm{S}_3$ case, the one-index symmetries are explicitly given by~\cite[Theorem 3.6]{Oste}
\begin{equation}\label{Oi}
O_{1}  =\frac{\kappa}{\sqrt{2}} ( G_{12}-G_{31})\qquad
O_{2}  = \frac{\kappa}{\sqrt{2}} ( G_{23}-G_{12})\qquad
O_{3}  =  \frac{\kappa}{\sqrt{2}} ( G_{31}-G_{23}),
\end{equation}
where 
\[
G_{12} = \frac{1}{\sqrt2}g_{12} (e_1-e_2),\quad G_{23} =  \frac{1}{\sqrt2}g_{23} (e_2-e_3),\quad G_{31} = \frac{1}{\sqrt2}g_{31} (e_3-e_1).
\]
Note that the three one-index symmetries are not independent as $O_3 = -O_1 - O_2$, and moreover $O_1O_2O_1 = (3\kappa^2/2)O_3$. As a direct consequence of $e_0$ anticommuting with $e_1,e_2,e_3$, we see that $G_{12},G_{23},G_{31}$ and in turn $O_1,O_2,O_3$ anticommute with the Dunkl Dirac Hamiltonian $H_{\kappa}$. 

The operators $G_{12},G_{23},G_{31}$ appearing here consist of a transposition of $\mathrm{S}_3$ appended with the
Clifford element corresponding to the normed vector in the root system associated to the reflection in question (which is an element of the Pin group of the Clifford algebra). It was observed already~\cite{Oste} 
that they also anticommute with $\underline{D}$ (one easily verifies this by direct computation)
\[
\{\underline{D},G_{12}\}=0,\qquad \{\underline{D},G_{23}\}=0,\qquad \{\underline{D},G_{31}\}=0.
\]
The symmetries $G_{12},G_{23},G_{31}$ in fact generate a new copy of the symmetric group $\mathrm{S}_3$, which extends its action to affect also Clifford algebra elements, with an extra minus sign.
Indeed, we have
\begin{align*}
(G_{ij})^2 & = 1, &   G_{ij} e_iG_{ij}&  = -e_j , &  G_{ij} e_j G_{ij} &= -e_i, & G_{ij} e_kG_{ij} &= -e_k
\end{align*} 
where $(i,j,k)$ is a cyclic permutation of $\{1,2,3\}$. 
Moreover, $G_{12} G_{23} G_{12} = G_{31}$ with analogous relations for conjugation with $G_{23}$ and $G_{31}$.
The symmetries corresponding to the two even elements of $\mathrm{S}_3$ are
\[
G_{123} = G_{12}G_{23} = \frac12 g_{123}( e_1e_2+e_2e_3+e_3e_1-1) = G_{23}G_{31} = G_{31}G_{12},
\]
\[
G_{321} = G_{23}G_{12} = \frac12 g_{321}( e_2e_1+e_3e_2+e_1e_3-1) = G_{31}G_{23} = G_{12}G_{31},
\]
which both commute with the $\mathrm{S}_3$ Dirac-Dunkl operator, the element $e_0$, and hence with $H_{\kappa}$.
This gives two constants of motion, directly related to the underlying reflection group of the Dunkl operators, corresponding to the actions of cyclically permuting the coordinates $x_1,x_2,x_3$. 
Indeed, the addition of the potential term to the Hamiltonian~\eqref{DiracDunklHam} breaks the spherical symmetry, being invariant only under a subgroup of the rotation group $\mathrm{SO}(3)$.
 
The individual symmetries $G_{12},G_{23},G_{31}$ are not contained in the algebra $\mathcal{O}\mathrm{S}_3$. However, the one-index symmetries $O_1,O_2,O_3$ are built up from $G_{12},G_{23},G_{31}$, so it is useful to extend the symmetry algebra to contain also this realization of $\mathrm{S}_3$. We will denote this extension by $\mathcal{O}\mathrm{S}t_3$.

The two-index symmetries $O_{ij}$ commute with the $\mathrm{S}_3$ Dirac-Dunkl operator. These Dunkl versions of the total angular momentum operators are explicitly given by~\cite[Example 4.2.2]{Oste}
\begin{align}
O_{ij} & = L_{ij} + \frac12 e_i e_j + O_ie_j-O_je_i\notag\\
& = L_{ij} + \frac12 e_i e_j + \frac{\kappa}{\sqrt{2}} ( G_{ij}e_i-G_{jk}e_i+G_{ij}e_j-G_{ki}e_j)\label{Oij}\\
& = L_{ij} + \frac12 e_i e_j + \kappa(g_{12}+g_{23}+g_{31})e_ie_j-O_ke_1e_2e_3\notag
\end{align}
where $i,j,k$ is a cyclic permutation of $1,2,3$, $L_{ij}$ is a Dunkl angular momentum operator~\eqref{DunklAM}, and the last line follows by means of the identity
\[
O_1 e_1+O_2 e_2+O_3 e_3 = \kappa(g_{12}+g_{23}+g_{31})\,.
\]
For $\kappa=0$, they reduce to the classical total angular momentum operators~\eqref{LS} (up to multiplication by the imaginary unit $\ii$). When $\kappa$ is nonzero, they contain besides a Dunkl angular momentum term~\eqref{DunklAM} and a spin term, also a non-trivial part involving reflections and Clifford elements.

As each term of $O_{ij}$ contains an even number of Clifford algebra generators $e_1,e_2,e_3$, using again their anticommutation relations, we find that the symmetries $O_{12},O_{23},O_{31}$ indeed 
commute with the Dirac-Dunkl Hamiltonian $H_{\kappa}$.


The final symmetry is the three-index symmetry
\begin{equation}\label{O123}
O_{123} 	 =  e_1e_2e_3+   O_1 e_2e_3+O_2 e_3e_1+O_3 e_1e_2   +L_{12} e_3 +L_{23} e_1+L_{31} e_2 
\end{equation}
which anticommutes with the $\mathrm{S}_3$ Dirac-Dunkl operator.
This symmetry is equal to the Scasimir of $\mathfrak{osp}(1|2)$ given by~\eqref{Scasi}, multiplied by $e_1e_2e_3$ (as obtained already in general in ref.~\cite{Oste}): 
\begin{equation*}
O_{123} =\frac12\left( [\uD,\ux]-1\right) \, e_1e_2 e_3 = (\Gamma+1) \, e_1e_2 e_3 \rlap{\,.}
\end{equation*}
The entity $e_1e_2e_3$ satisfies $(e_1e_2e_3)^2=-1$ and acts as a pseudo-scalar in the 3D Clifford algebra generated by $e_1,e_2,e_3$. In fact, in the realization by means of the Pauli matrices, $e_1e_2e_3$ is just $\ii$ times the identity matrix. 
Because of the anticommutation relations of $e_1,e_2,e_3$, one immediately sees that $[\underline{D},e_1e_2e_3]=0$. However, $e_1e_2e_3$ anticommutes with $e_0$. The Scasimir element $\Gamma+1$ portrays the opposite behavior, anticommuting with $\uD$ and commuting with $e_0$. As a consequence, $O_{123}$ will anticommute with the Dirac-Dunkl Hamiltonian $H_{\kappa}$.

By direct computation, one readily shows that the symmetry $O_{123}$ is central in the algebra  $\mathcal{O}\mathrm{S}t_3$. Moreover, it can be written in terms of the other symmetries as follows
\[
O_{123} 	 =  -\frac12 e_1e_2e_3-   O_1 e_2e_3-O_2 e_3e_1-O_3 e_1e_2   +O_{12} e_3 +O_{31} e_2 +O_{23} e_1\rlap{\,.}
\]
Again by direct computation one finds 
\begin{align}\label{Casi}
(O_{123})^2 =\ & -\frac14 +O_1^2 +O_2^2 +O_3^2+  O_{12}^2 +  O_{23}^2 +  O_{31}^2 \\
=\ & O_{12}^2 +  O_{23}^2 +  O_{31}^2 -\frac32\kappa^2(G_{123}+G_{321}) +3\kappa^2 -\frac14 \rlap{\,,}\notag
\end{align}
which corresponds, up to a sign, to the Casimir element $(\Gamma+1)^2$ of the $\mathfrak{osp}(1|2)$ realization~\eqref{osp12}.

We have shown that the Dunkl Dirac Hamiltonian~\eqref{DiracDunklHam} admits also the symmetries~\eqref{Oi}, \eqref{Oij} and \eqref{O123} of the Dirac-Dunkl operator~\eqref{DiracDunklOp}. 
The two-index symmetries $O_{12},O_{23},O_{31}$ commute with the Hamiltonian $H_{\kappa}$ and generalize the classical total angular momentum operators~\eqref{LS} as constants of motion of the Dunkl Dirac equation. 
We now translate the algebraic relations~\eqref{Osymalg} of the symmetry algebra $\mathcal{O}_3$ for a general Dirac operator to our Dunkl framework, yielding the ``Dunkl total angular momentum algebra.'' 
\begin{theo}
	The algebra $\mathcal{O}\mathrm{S}t_3$ generated by the symmetries $G_{12},G_{23},G_{31}$ and $O_{12},O_{23},O_{31},O_{123}$
	is governed by the following relations:
	\begin{itemize}
		\item $O_{123}$ commutes with the other symmetries,
		\item $G_{12},G_{23},G_{31}$ generate a copy of $\mathrm{S}_3$ and act on the indices of $O_{12},O_{23},O_{31}$ by an $\mathrm{S}_3$ action with minus sign, i.e. 
		\begin{equation}\label{GO}
		G_{12}O_{12} = -O_{12}G_{12},\quad 	G_{12}O_{23} = -O_{31}G_{12},\quad 	G_{12}O_{31} = -O_{23}G_{12},
		\end{equation}
		and analogous actions of $G_{23}$ and $G_{31}$,
		\item the commutation relations 
		\begin{align}
		[ O_{12},O_{31}  ] 
		& = O_{23} +\sqrt{2}\kappa O_{123} (G_{12} -  G_{31} ) +  \frac32\kappa^2 (G_{123} - G_{321})\notag\\
		[ O_{23}, O_{12} ] 
		& =  O_{31} +\sqrt{2}\kappa O_{123} ( G_{23} -  G_{12}   ) +  \frac32\kappa^2 (G_{123} - G_{321}) \label{OijOki}  \\
		[  O_{31}, O_{23} ] 
		& = O_{12} +\sqrt{2}\kappa O_{123}(G_{31} -  G_{23}  )+  \frac32\kappa^2 (G_{123} - G_{321})  \notag
		.
		\end{align}
	\end{itemize}
	where $\kappa$ is a scalar factor.
\end{theo}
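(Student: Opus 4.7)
The approach is to leverage the abstract algebraic relations~\eqref{Osymalg} obtained in~\cite{Oste} for this class of Dirac operators, and then specialize them using the explicit forms~\eqref{Oi}, \eqref{Oij}, \eqref{O123} of the symmetries in the $\mathrm{S}_3$ case. The centrality of $O_{123}$ in $\mathcal{O}\mathrm{S}t_3$ and the defining $\mathrm{S}_3$ relations for $G_{12},G_{23},G_{31}$ (namely $(G_{ij})^2=1$ together with $G_{12}G_{23}G_{12}=G_{31}$ and its cyclic companions) have already been recorded in the preceding discussion, so what remains is to establish the action~\eqref{GO} and to derive the explicit commutators~\eqref{OijOki}.

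For the action~\eqref{GO}, I would conjugate $O_{ij}$ term by term using its expression~\eqref{Oij}. Three elementary transformation rules are needed. First, a direct calculation from $g_{ij}\cD_k=\cD_{\sigma(k)}g_{ij}$ and $g_{ij}x_k=x_{\sigma(k)}g_{ij}$ yields $g_{12}L_{12}g_{12}=-L_{12}$, $g_{12}L_{23}g_{12}=-L_{31}$, $g_{12}L_{31}g_{12}=-L_{23}$; since $(e_1-e_2)^2=2$ and $g_{ij}$ commutes with Clifford generators, conjugation by $G_{12}$ produces the same answer on operators free of Clifford elements. Second, the Clifford bivector rule $G_{k\ell}e_ae_bG_{k\ell}=(G_{k\ell}e_aG_{k\ell})(G_{k\ell}e_bG_{k\ell})$ combined with the single-generator rules stated in the paper gives $G_{12}e_1e_2G_{12}=-e_1e_2$, $G_{12}e_2e_3G_{12}=-e_3e_1$, $G_{12}e_3e_1G_{12}=-e_2e_3$. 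Third, using $G_{12}G_{ij}G_{12}=G_{\sigma(ij)}$ where $\sigma$ is the transposition $(12)$, the one-index symmetries transform as $G_{12}O_1G_{12}=-O_2$, $G_{12}O_2G_{12}=-O_1$, $G_{12}O_3G_{12}=-O_3$. Assembling these three layers of transformations in~\eqref{Oij} yields $G_{12}O_{12}G_{12}=-O_{12}$, $G_{12}O_{23}G_{12}=-O_{31}$, $G_{12}O_{31}G_{12}=-O_{23}$; the analogues for $G_{23}$ and $G_{31}$ follow by cyclic symmetry.

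For the commutators~\eqref{OijOki}, I would start from the abstract relations~\eqref{Osymalg} and substitute~\eqref{Oi}. By centrality of $O_{123}$ the anticommutator simplifies to
\[
\{O_{123},O_i\}=2O_{123}O_i=\sqrt{2}\,\kappa\,O_{123}(G_{jk}-G_{ij}).
\]
For the remaining piece, a short calculation from the multiplication table of $\mathrm{S}_3$ shows that every pairwise commutator among $G_{12},G_{23},G_{31}$ equals $G_{123}-G_{321}$ up to a sign; expanding
\[
[O_3,O_1]=\tfrac{\kappa^{2}}{2}\bigl[G_{31}-G_{23},\,G_{12}-G_{31}\bigr]
\]
by bilinearity collects three such contributions with matching signs and produces $\tfrac{3\kappa^{2}}{2}(G_{123}-G_{321})$, and the same common expression reappears in each cyclic version $[O_1,O_2]$ and $[O_2,O_3]$. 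Substituting both simplifications into~\eqref{Osymalg} gives exactly~\eqref{OijOki}.

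The main obstacle is the bookkeeping in the verification of~\eqref{GO}: each $O_{ij}$ is built out of three structurally distinct pieces (a Dunkl angular momentum, a Clifford bivector, and reflection--Clifford cross terms through $O_i,O_j$), and under conjugation by $G_{k\ell}$ one must check that the sign flip and the index permutation generated independently in each piece combine consistently to produce $-O_{\sigma(i)\sigma(j)}$ overall. Once this consistency is confirmed on a single representative such as $G_{12}O_{23}G_{12}=-O_{31}$, everything else follows by symmetry, and the derivation of~\eqref{OijOki} from the already-established~\eqref{Osymalg} is routine substitution.
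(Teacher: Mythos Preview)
Your approach is correct and matches the paper's own proof, which is extremely terse: it simply records that the commutation relations~\eqref{OijOki} follow from~\eqref{Osymalg}, the explicit form~\eqref{Oi}, and the single computation
\[
[O_1,O_2]=[O_2,O_3]=[O_3,O_1]=\tfrac{3}{2}\kappa^{2}(G_{123}-G_{321}).
\]
You carry out exactly this substitution, and in addition you supply the term-by-term verification of the conjugation rules~\eqref{GO}, which the paper leaves implicit (the centrality of $O_{123}$ and the $\mathrm{S}_3$ structure of the $G$'s having already been established in the text preceding the theorem). One small slip: in your displayed formula $\{O_{123},O_i\}=\sqrt{2}\,\kappa\,O_{123}(G_{jk}-G_{ij})$ the indices are off; from~\eqref{Oi} one has $O_i=\tfrac{\kappa}{\sqrt{2}}(G_{ij}-G_{ki})$ for $(i,j,k)$ cyclic, so the correct expression is $\sqrt{2}\,\kappa\,O_{123}(G_{ij}-G_{ki})$, which is what actually reproduces the $G$-terms in~\eqref{OijOki}. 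This is a bookkeeping typo, not a gap in the argument.
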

\begin{proof}
	This follows immediately from~\eqref{Osymalg}, the explicit expressions~\eqref{Oi} and
	\[
	[O_1,O_2]  = \frac32\kappa^2 (G_{123} - G_{321})= [O_2,O_3] = [O_3,O_1]\,.
	\]
\end{proof}

Note that for $\kappa = 0$ the commutation relations~\eqref{OijOki} reduce to the well-known relations of the Lie algebra $\mathfrak{so}(3)$, the classical total angular momentum algebra.
For the sequel we will consider $\kappa$ to be non-zero.

\section{Representations}
\label{sec:4}

Both from a purely mathematical point of view and because of 
their potential use in constructing physical models, we are interested in determining all finite dimensional irreducible (unitary) representations of the algebra $\mathcal{O}\mathrm{S}t_3$ in abstract form. 
We will build up irreducible representations starting from a mutual eigenvector of a set of commuting operators. 
Contrary to the classical case where one generally uses an eigenbasis for the $z$ component of the total angular momentum, corresponding to $J_{12}$ or $O_{12}$ in our notation, it will be helpful to incorporate the $\mathrm{S}_3$ structure for the Dunkl case. 
From the relations~\eqref{GO}, we see that the linear combination $O_{12} +  O_{23}+  O_{31}$ anticommutes with $G_{12},G_{23},G_{31}$ and thus commutes with the even elements $G_{123}$ and $G_{321}$ of $\mathrm{S}_3$. 
Now, in order to construct irreducible representations, we will use a form of ladder operators. Hereto, we start by defining some auxiliary operators.

\begin{defi}\label{def1}
	Say $\omega = e^{2\pi \ii/3}$, so 
	\[\omega =  -\frac12+\ii\frac{ \sqrt3}{2}\,,\qquad \omega^2  =  -\frac12-\ii\frac{ \sqrt3}{2}=\overline{\omega}=\omega^{-1}\,,\qquad  \omega^3=1\,.
	\]	We define the following linear combinations in the algebra  $\mathcal{O}\mathrm{S}t_3$, with inverse relations on the right,
	\begin{align}
	O_0 & = \frac{-\ii}{\sqrt3}(O_{12} +  O_{23}+  O_{31}), & 
	O_{12} & =   \frac{\ii}{\sqrt{6}} ( \sqrt{2} O_0 +  O_+  + O_-), \notag\\ \label{O0pm}
	O_{+} &=  -\ii\sqrt{\frac{2}{3}} (O_{12} +   \omega   O_{23}+  \omega^2 O_{31}), &
	O_{23} & = \frac{\ii}{\sqrt{6}} ( \sqrt{2} O_0 + \omega^2 O_+  +\omega\, O_-), \\
	O_{-} & =  -\ii\sqrt{\frac{2}{3}} (O_{12} + \omega^2 O_{23}+  \omega   O_{31}), &
	O_{31} & = \frac{\ii}{\sqrt{6}} (\sqrt{2}O_0+\omega\,O_+ +\omega^2 O_-)\rlap{\,.}\notag
	\end{align}
	We also define a set of linear combinations of $G_{12},G_{23},G_{31}$
	\begin{align}\label{Npm}
	N_{+}  &= G_{12} + \omega G_{23} +  \omega^2 G_{31}\,,&
	N_{-} & = G_{12} + \omega^2 G_{23} +  \omega G_{31}\,.
	\end{align}
\end{defi}
Note that $N_+$ and $N_-$ generate the same subset of the group algebra $\mathbb{C}\mathrm{S}_3$ as $O_1,O_2,O_3$ do. The addition of $G_{12}$ yields the full $\mathrm{S}_3$ realization. 

\begin{prop}\label{Result1}
	The elements of the algebra  $\mathcal{O}\mathrm{S}t_3$ defined in Definition~\ref{def1} satisfy the relations
	\begin{align}\label{O0Opm}
	[O_0 , O_{\pm} ] = \ & \pm  O_{\pm} +2\kappa O_{123} N_{\pm} \\
	[O_+,O_-] = \ &  \ \, 2\, O_0\, +\kappa^2 [N_+,N_-] \label{OpOm}
	\end{align}
	where $[N_+,N_-] = -\ii\, 3\sqrt{3} (G_{123}-G_{321})$. 
	
	Moreover, the elements $N_{\pm}$ are nilpotent, that is $N_{\pm}^2 =0$, and satisfy
	\begin{equation}\label{NpNm2}
	(N_{\pm}N_{\mp})^2 = 9N_{\pm}N_{\mp}\rlap{\,.}
	\end{equation}
	The interaction with $O_0,O_+,O_-$ is as follows
	\begin{equation}\label{NO}
	N_{\pm} O_0= -O_0N_{\pm}\rlap{\,,}\qquad N_{\pm} O_{\pm}= -O_{\mp}N_{\mp}\rlap{\,,}\qquad N_{\pm} O_{\mp}= -O_{\pm}N_{\mp}\rlap{\,.}
	\end{equation}
	Finally, the square~\eqref{Casi} can be rewritten in the following forms
	\begin{align}\label{Cas}
	(O_{123})^2 =  & -O_{0}^2 -  \frac12\{O_{+} ,  O_{-}\} + \kappa^2\frac12\{N_{+} ,  N_{-}\}-\frac14 \notag\\
	=  &-O_{0}^2 - O_{+}  O_{-} + O_0 +\kappa^2N_+N_- -\frac14  \\
	=  &-O_{0}^2 -O_{-}O_{+}-  O_0 +\kappa^2N_-N_+  -\frac14 \, .\notag
	\end{align}
\end{prop}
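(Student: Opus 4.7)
The plan is to verify each item by direct computation, leveraging two organizing principles: the multiplication table on $G_{12},G_{23},G_{31},G_{123},G_{321}$ mirrors that of $\mathrm{S}_3$ (since $g_{ij}$ commutes with the Clifford algebra and the Clifford prefactors $\tfrac{1}{\sqrt 2}(e_i-e_j)$ multiply compatibly with the transpositions), and the identity $1+\omega+\omega^2=0$ forces cyclic $\omega$-weighted sums to collapse.

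For the $N$-statements, I would expand directly from Definition~\ref{def1}. In $N_\pm^2$ each of the three $\mathrm{S}_3$-conjugacy classes (identity, $G_{123}$, $G_{321}$) appears with total coefficient $1+\omega+\omega^2=0$, yielding nilpotency. Expanding $N_\pm N_\mp$ produces $3+3\omega^{\pm 2}G_{123}+3\omega^{\pm 1}G_{321}$; the difference, combined with $\omega^2-\omega=-\ii\sqrt{3}$, gives the stated $[N_+,N_-]$, and squaring (using $G_{123}^2=G_{321}$, $G_{321}^2=G_{123}$, $G_{123}G_{321}=1$) yields~\eqref{NpNm2}. For the interactions~\eqref{NO}, I would extend the rules~\eqref{GO} to $G_{23}$ and $G_{31}$ (each $G_{ij}$ acts on the cyclic triple $(O_{12},O_{23},O_{31})$ as a transposition-with-minus-sign), expand $N_\pm O_0$ and $N_\pm O_\pm$ term by term, and observe that the $\omega$-phases reassemble the result into the stated right-hand sides.

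The main obstacle is~\eqref{O0Opm} and~\eqref{OpOm}. For each I would expand into a sum of nine phase-weighted commutators $[O_{ij},O_{kl}]$, substitute~\eqref{OijOki}, and simplify. The central piece $\tfrac{3}{2}\kappa^2(G_{123}-G_{321})$ appears with total coefficient $(1-\omega)+(\omega^2-1)+(\omega-\omega^2)=0$ (after using antisymmetry of the commutator) and drops out. The leading $O_{mn}$-terms reassemble into $\pm O_\pm$ (resp.\ $2O_0$), and the $\sqrt{2}\kappa O_{123}(G_\cdot-G_\cdot)$ terms into $2\kappa O_{123}N_\pm$ (resp.\ $\kappa^2[N_+,N_-]$). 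The delicate part is the bookkeeping: tracking the phase coefficients $1-\omega$, $\omega^2-1$, $\omega-\omega^2$ and recognizing the final sums as elements of $\{O_0,O_\pm,N_\pm\}$ via the inverse relations in~\eqref{O0pm}.

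For the three equivalent forms~\eqref{Cas}, I would substitute the inverse relations from~\eqref{O0pm} into~\eqref{Casi}. In $O_{12}^2+O_{23}^2+O_{31}^2$ the cross terms carry phase sums $\sum \alpha=0$ and $\sum \alpha^2=0$, leaving $-O_0^2-\tfrac{1}{2}\{O_+,O_-\}$. From the $N_\pm N_\mp$ expansions above one reads $\{N_+,N_-\}=6-3(G_{123}+G_{321})$, which turns $-\tfrac{3}{2}\kappa^2(G_{123}+G_{321})+3\kappa^2$ into $\tfrac{\kappa^2}{2}\{N_+,N_-\}$ and yields the first form. The second and third forms then follow by rewriting $\tfrac{1}{2}\{A,B\}=AB\mp\tfrac{1}{2}[A,B]$ for both $(A,B)=(O_+,O_-)$ and $(A,B)=(N_+,N_-)$, and invoking~\eqref{OpOm} together with the already-established formula for $[N_+,N_-]$.
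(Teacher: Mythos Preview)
Your approach is essentially the same as the paper's: direct expansion using~\eqref{OijOki}, the $\mathrm{S}_3$ multiplication table for the $G$'s, and the identity $1+\omega+\omega^2=0$. The paper likewise expands $N_\pm N_\mp$ explicitly to $3+3\omega^{\mp1}G_{123}+3\omega^{\pm1}G_{321}$, derives~\eqref{NO} from the intermediate relations $G_{ij}O_0=-O_0G_{ij}$, $G_{ij}O_\pm=-\omega^{\bullet}O_\mp G_{ij}$, and obtains~\eqref{Cas} by substituting the inverse relations~\eqref{O0pm} into~\eqref{Casi}.

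One bookkeeping slip to fix: in your treatment of $[O_+,O_-]$ the roles of the $\kappa$- and $\kappa^2$-terms are reversed. When you expand $[O_+,O_-]$, all three basic commutators $[O_{23},O_{12}]$, $[O_{31},O_{23}]$, $[O_{12},O_{31}]$ enter with the \emph{same} coefficient $(\omega-\omega^2)$, so the $\sqrt{2}\kappa O_{123}(G_{\cdot}-G_{\cdot})$ contributions telescope to zero (since $(G_{23}-G_{12})+(G_{31}-G_{23})+(G_{12}-G_{31})=0$), while the three equal $\tfrac{3}{2}\kappa^2(G_{123}-G_{321})$ contributions add up and produce the $\kappa^2[N_+,N_-]$ term. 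Your claim that the $\kappa^2$ piece drops out with coefficient $(1-\omega)+(\omega^2-1)+(\omega-\omega^2)$ is correct only for $[O_0,O_\pm]$; for $[O_+,O_-]$ the cancellation pattern is the opposite. This does not affect the validity of the plan, only the attribution of which term survives in which identity.
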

\begin{proof}
	The relations~\eqref{O0Opm} and \eqref{OpOm} are proved by straightforward computations using the commutation relations~\eqref{OijOki}. 
	For the commutator of $N_+$ and $N_-$ we have
	\begin{equation}
	N_+N_-
	= (  G_{12} + \omega G_{23} +  \omega^2 G_{31}) 
	(  G_{12} + \omega^2 G_{23} +  \omega G_{31})
	= 3 + 3\omega^2 G_{123} + 3\omega G_{321} \rlap{\,,} \label{NpNm}
	\end{equation}
	while similarly
	\begin{equation}
	N_-N_+
	= (  G_{12} + \omega^2 G_{23} +  \omega G_{31})
	(  G_{12} + \omega G_{23} +  \omega^2 G_{31})
	= 3 + 3\omega G_{123} + 3\omega^2 G_{321} \rlap{\,,}   \label{NmNp}
	\end{equation}
	which leads to $[N_+,N_-] = -\ii\, 3\sqrt{3} (G_{123}-G_{321})$, and also $\{N_+,N_-\} = 6- 3 (G_{123}+G_{321})$.
	
	We illustrate the nilpotency of $N_{+}$, the result for $N_{-}$ is similar,
	\begin{align*}
	N_+^2
	=\ & \left(  G_{12} + \omega G_{23} +  \omega^2 G_{31}\right)^2\\
	=\ & 1 + \omega G_{12}  G_{23} +  \omega^2  G_{12} G_{31}
	+ \omega G_{23}G_{12} +  \omega^2 +  G_{23}G_{31}
	+  \omega^2 G_{31}G_{12} +  G_{31}G_{23} +\omega 
	\\
	=\ & 1 +\omega+  \omega^2 + ( 1 +\omega+  \omega^2) G_{123} + ( 1 +\omega+  \omega^2) G_{321} =0. 
	\end{align*}
	In the same way, starting now from the expressions~\eqref{NpNm} and \eqref{NmNp}, we obtain~\eqref{NpNm2}.
	
	The interactions in~\eqref{NO} follow immediately from 
	\begin{align}
	G_{12} O_0 &= -O_0 G_{12}& G_{23} O_0 &= - O_0 G_{23}& G_{31} O_0 &= - O_0 G_{31}\notag\\
	G_{12} O_{+} &= -O_{-} G_{12}& G_{23} O_{+} &= -\omega^2 O_{-} G_{23}& G_{31} O_{+}& = -\omega O_{-}G_{31}\label{GOpm}\\
	G_{12} O_{-} &= -O_{+} G_{12}& G_{23} O_{-} &= -\omega O_{+} G_{23}& G_{31} O_{-}& = -\omega^2 O_{+} G_{31}\,,\notag
	\end{align}
	which are direct consequences of~\eqref{GO} and the definitions~\eqref{O0pm}, \eqref{Npm}.

	Finally, the square~\eqref{Casi} is rewritten using the inverse relations~\eqref{O0pm}. 
	We find 
	\begin{align*}
	&-6 (O_{12}^2 +  O_{23}^2 +  O_{31}^2)\\
	= \ &( \sqrt{2} O_0 +  O_+  + O_-)^2 +  ( \sqrt{2} O_0 + \omega^2 O_+  +\omega O_-)^2 +  ( \sqrt{2} O_0 + \omega O_+  +\omega^2 O_-)^2 \\
	= \ &  (2+2+2) O_0^2 
	+ (1+\omega+ \omega^2) O_+^2  
	+(1+\omega^2+ \omega)  O_-^2   \\
	& + (1+\omega^2+\omega) \{\sqrt{2}O_0, O_+\} 
	+  (1+\omega+\omega^2)\{\sqrt{2}O_0,O_-\} 
	+ (1+1+1)\{O_+,O_-\} .
	\end{align*}
	The results now follow using the expression for $\{N_+,N_-\}$ and \eqref{O0Opm}.
\end{proof}

An essential ingredient for the construction and classification of representation spaces is the existence of a couple of ladder operators. 

\begin{prop}
	The elements in the algebra $\mathcal{O}\mathrm{S}t_3$
	\begin{equation}\label{Kpm}
	K_{+} = \frac12 \{O_0,O_{+}\} \qquad\qquad
	K_{-} =\frac12 \{O_0,O_{-}\}
	\end{equation}
	satisfy the relation
	\begin{align}\label{J0Jpm}
	[O_0 , K_{\pm} ] =\ & \pm  K_{\pm} \,.
	\end{align}
	Moreover, we have the factorization
	\begin{align}\label{KpKm}
	K_{+}	K_{-} & =-	\big(O_{123}^2+(O_0    -  1/2)^2
	\big)
	\big((O_0    -  1/2)^2-\kappa^2 N_{+}N_{-} \big)\\
	K_{-}K_{+} & =	-	\big(O_{123}^2+(O_0    + 1/2)^2 \big)
	\big((O_0    +  1/2)^2-\kappa^2 N_{-}N_{+} \big)\,.\label{KpKm2}
	\end{align}
\end{prop}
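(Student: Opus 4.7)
The identity $[O_0, K_\pm] = \pm K_\pm$ is the easy part: by the derivation rule $[O_0, \{O_0, O_\pm\}] = \{O_0, [O_0, O_\pm]\}$ combined with $[O_0, O_\pm] = \pm O_\pm + 2\kappa O_{123} N_\pm$ from (\ref{O0Opm}), I get $\{O_0, \pm O_\pm + 2\kappa O_{123} N_\pm\}$. Since $O_{123}$ is central and $N_\pm$ anticommutes with $O_0$ by (\ref{NO}), the $N_\pm$ piece contributes $\{O_0, N_\pm\} = 0$ and drops out, leaving $\pm\{O_0, O_\pm\} = \pm 2K_\pm$.

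For the factorizations I first rewrite $K_\pm$ in two equivalent forms by moving $O_0$ past $O_\pm$ via (\ref{O0Opm}):
\[
K_\pm = (O_0 \mp 1/2)\, O_\pm - \kappa O_{123} N_\pm = O_\pm\, (O_0 \pm 1/2) + \kappa O_{123} N_\pm.
\]
Choosing the first form for the left factor and the second form for the right factor of $K_+ K_-$ gives the four-term expansion
\[
K_+ K_- = (O_0 - 1/2) O_+ O_- (O_0 - 1/2) + \kappa O_{123}\bigl[(O_0 - 1/2)O_+ N_- - N_+ O_- (O_0 - 1/2)\bigr] - \kappa^2 O_{123}^2 N_+ N_-.
\]
For the main term, I substitute the Casimir identity (\ref{Cas}), $O_+ O_- = -O_{123}^2 - (O_0 - 1/2)^2 + \kappa^2 N_+ N_-$. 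Since $N_+$ and $N_-$ each anticommute with $O_0$ by (\ref{NO}), their product $N_+ N_-$ commutes with $O_0$ and hence with $(O_0 - 1/2)$, so this term cleanly becomes $-O_{123}^2(O_0 - 1/2)^2 - (O_0 - 1/2)^4 + \kappa^2 N_+ N_- (O_0 - 1/2)^2$.

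The delicate step is the bracketed middle part. Using $O_+ N_- = -N_+ O_-$ from (\ref{NO}), the identity $N_\pm(O_0 \mp 1/2) = -(O_0 \pm 1/2) N_\pm$ (also from (\ref{NO})), and one further application of the commutator $[O_0, O_\pm]$, I reduce both $(O_0 - 1/2) O_+ N_-$ and $N_+ O_- (O_0 - 1/2)$ to expressions involving $(O_0 - 1/2) N_+ O_-$ that differ precisely by $2\kappa O_{123} N_+ N_-$. This produces a net contribution of $2\kappa^2 O_{123}^2 N_+ N_-$ from the bracket, and combined with the final cross term $-\kappa^2 O_{123}^2 N_+ N_-$ it leaves a net $+\kappa^2 O_{123}^2 N_+ N_-$. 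Regrouping the four pieces yields the desired factorization $-\bigl(O_{123}^2 + (O_0 - 1/2)^2\bigr)\bigl((O_0 - 1/2)^2 - \kappa^2 N_+ N_-\bigr)$. The formula for $K_- K_+$ follows by the analogous computation, using the other form of the Casimir, $O_- O_+ = -O_{123}^2 - (O_0 + 1/2)^2 + \kappa^2 N_- N_+$, and the symmetric product $[(O_0 + 1/2) O_- - \kappa O_{123} N_-][O_+(O_0 + 1/2) + \kappa O_{123} N_+]$.

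The main obstacle is the sign bookkeeping in the mixed-term collapse: several times one must push $(O_0 \mp 1/2)$ through $N_\pm$ (flipping the sign attached to $1/2$) and $O_\pm$ through $N_\mp$ (swapping both labels and a sign), then re-invoke $[O_0, O_\pm]$ to close the cancellation. The computation is mechanical once these crossing moves are tabulated, but it is essential that the two terms inside the bracket do not cancel completely; the surviving $2\kappa^2 O_{123}^2 N_+ N_-$ is exactly the amount needed to match the factored right-hand side.
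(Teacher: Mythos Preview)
Your proof is correct and follows the same route the paper indicates: the paper's own proof of the factorization simply says it ``follows by long and tedious, but otherwise straightforward computations starting from the definitions~\eqref{Kpm}, and using the relations~\eqref{O0Opm},~\eqref{NO}, and the expression~\eqref{Cas},'' and you have carried out exactly that computation with the same ingredients. Your handling of the mixed term via $O_+N_- = -N_+O_-$ and $N_\pm(O_0\mp1/2) = -(O_0\pm1/2)N_\pm$ is the right mechanism, and the surviving $2\kappa O_{123}N_+N_-$ inside the bracket (hence $2\kappa^2 O_{123}^2 N_+N_-$ after the $\kappa O_{123}$ prefactor) is precisely what is needed.
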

\begin{proof}
	We immediately find that
	\begin{align*}
	[O_0 , K_{\pm} ] =\frac12 [ O_0, \{O_0,O_{\pm}\}]
	=\frac12\{ O_0,  [O_0,O_{\pm}]\} 
	=\pm\frac12 \{ O_0,   O_{\pm}\}+\kappa\{ O_0, O_{123} N_{\pm}  \}
	=\pm K_{\pm} 
	\end{align*}	
	as $O_{123}$ commutes with $O_0$ and $N_{\pm}$, and $N_{\pm}$ anticommutes with $O_0$, see~\eqref{NO}. 
	
	The factorization of $K_{+}	K_{-}$ and $K_{-}K_{+}$ follows by long and tedious, but otherwise straightforward computations starting from the definitions~\eqref{Kpm}, and using the relations~\eqref{O0Opm},~\eqref{NO}, and the expression~\eqref{Cas}. 
\end{proof}

From~\eqref{GOpm} we find the interaction of the $\mathrm{S}_3$ realization with $K_{\pm}$ to be as follows
\begin{align}\notag
G_{12}K_{+} &=  K_{-} G_{12}\,,  &	G_{123}K_{+} &= \omega^2 K_{+} G_{123}\,,  &	G_{321}K_{+} &= \omega K_{+} G_{321}\,,\\ \label{GKpm}
G_{23}K_{+} &= \omega^2 K_{-} G_{23}\,,  & 	G_{123}K_{-} &= \omega K_{-} G_{123}\,,   &	G_{321}K_{-} &= \omega^2 K_{-} G_{321} \,,\\  \notag
G_{31}K_{+} &= \omega K_{-} G_{31}\,.   
\end{align}

Our aim is now to determine all finite-dimensional irreducible representations of $\mathcal{O}\mathrm{S}t_3$.
Hereto, let $(V,\rho_V)$ be a representation of $\mathcal{O}\mathrm{S}t_3$. From here on, we consider $V$ as an $\mathcal{O}\mathrm{S}t_3$ module by setting $G\cdot v = \rho_V(G)v$ for $G\in \mathcal{O}\mathrm{S}t_3$ and $v\in V$. 

The element $O_{123}$ commutes with all of the algebra $\mathcal{O}\mathrm{S}t_3$ so its action on an invariant subspace $V_0$ of the representation $V$ will be multiplication by a constant $\Lambda$. 
The constant $\Lambda$ will later be determined in terms of other parameters characterizing the representation. 

Following the results obtained in Proposition~\ref{Result1}, our starting point will be the element $O_0$, given by~\eqref{O0pm}, which commutes with the even $\mathrm{S}_3$ elements $G_{123}$ and $G_{321}$. 
Hence, without loss of generality, we can consider a mutual eigenvector for all these elements. 
Take $v_0\in V$ to be such an eigenvector with eigenvalue $\lambda$ for $O_0$. The eigenvalue for $G_{123}$ is restricted to the set $\{1,\omega,\omega^2\}$ as $G_{123}^3 =G_{123}G_{321}= 1 $
and if $G_{123}v_0 = \alpha v_0 $ then $G_{321}v_0 = \alpha^{-1}v_0 $. 

We will construct the $\mathcal{O}\mathrm{S}t_3$ invariant subspace containing $v_0$. 
If $V$ is irreducible this space must be either $V$ or trivial. 
The trivial case results from $v_0$ being the zero vector, so from now on we assume that $v_0$ is not the zero vector.   

If $O_0 v_0 =\lambda v_0$, then for a positive integer $k$, the vector $(K_{\pm})^k v_0$ is also an eigenvector of $O_0$. 
Indeed, using $\lbrack O_0,(K_{\pm})^k\rbrack = \pm k(K_{\pm})^k $, which follows directly from \eqref{J0Jpm}, we have 
\begin{equation}
\label{actJ0}
O_0 (K_{\pm})^k v_0 = \bigl( (K_{\pm})^k O_0 + \lbrack O_0,(K_{\pm})^k\rbrack \bigr) v_0  = K_{\pm} O_0 v_0  \pm k(K_{\pm})^k   K_{\pm} v_0 = (\lambda \pm k)  (K_{\pm})^k v_0 \rlap{\,.}
\end{equation}
The set of vectors  $\bigl\{ (K_+)^k v_0 \,\big\vert\, k \in \mathbb{N} \bigr\}$ must be linearly independent because they have distinct eigenvalues as eigenvectors of $O_0$. 
If we impose $V$ to be finite-dimensional, then $(K_+)^k v_0=0$ for some $k\in \mathbb{N}$. 
Without loss of generality we may assume that $K_+v_0 = 0$. 
Following the same reasoning, the sequence $\bigl\{ (K_-)^k v_0 \,\big\vert\, k \in \mathbb{N} \bigr\}$ must also be linearly independent and thus must terminate. 
Hence $K_-(K_-^nv_0) = 0$ for some $n \in \mathbb{N}$ and we may assume without loss of generality that $n$ is minimal in this aspect, 
i.e.~$K_-^{n}v_0 \neq 0$. 

So far, we have obtained the following vectors of the representation $V$:
\begin{equation}
\label{basis0}
\bigl\{ v_k :=(K_-)^k v_0 \,\big\vert\, k = 0,\dots,n \bigr\}\,.
\end{equation}
The space spanned by these vectors is invariant under the action of $O_0$, $G_{123},G_{321},O_{123}$ and $K_-$, with $O_0 v_k=(\lambda-k) v_k$. Recall that $G_{123} v_0 = \alpha v_0$ for $\alpha\in\{1,\omega,\omega^2\}$, or thus $\alpha = \omega^{\ell}$ for some integer $\ell$. By~\eqref{GKpm}, we then have
\[
G_{123} v_k = G_{123}(K_-)^k v_0 = \omega^{k} (K_-)^kG_{123} v_0  = \omega^{\ell+k}  v_k, \qquad G_{321}v_k =  \omega^{-\ell-k}  v_k\,.
\]
The transpositions $G_{12},G_{23},G_{31}$ all square to the identity and anticommute with $O_0$. Let $v_k^- = G_{12} v_k$, then $G_{12} v_k^- = v_k$ and $O_0 v_k^- =O_0 G_{12} v_k  =- G_{12} O_0v_k= -(\lambda-k)v_k^-$. 
Moreover, $G_{23} v_k$ and $G_{31} v_k$ must both be proportional to $v_k^-$ since the compositions $G_{123}$ and $G_{321}$ act diagonally on $v_k$, and in turn also on $v_k^-$. Indeed, we have 
\[
G_{123} v_k^- = G_{123}G_{12} v_k = G_{12}G_{321} v_k =\omega^{-\ell-k} v_k^-, \qquad G_{321}v_k^- =   \omega^{\ell+k} v_k^-\rlap{\,.}
\]
It follows from $G_{12}K_- = K_+G_{12}$ that $v_k^- = G_{12} (K_-)^k v_0 = (K_+)^k  G_{12}v_0 = (K_+)^k  v_0^-$ or thus $K_+ v_k^-=v_{k+1}^-$. 
In this way, we arrive at the following set of vectors of $V$:
\begin{equation}
\label{basis}
\mathcal{B} = \bigl\{ v_k^+ := v_k = (K_-)^k v_0 \,\big\vert\, k = 0,\dots,n \bigr\}\cup \bigl\{ v_k^- :=G_{12} v_k^+ = (K_+)^k G_{12}v_0\,\big\vert\, k = 0,\dots,n \bigr\}\rlap{\,.}
\end{equation}
All these vectors are eigenvectors of the mutually commuting elements $O_0$ and $G_{123}$:
\begin{equation}\label{actO0}
O_0 v_k^{\pm}=\pm(\lambda-k) v_k^{\pm}\rlap{\,,}
\end{equation}
for $k\in\{0,\dotsc,n\}$, while
\begin{equation}\label{actG123}
G_{123} v_k^{\pm} =  \omega^{\pm(\ell+k)} v_k^{\pm}, \qquad G_{321}v_k^{\pm} =  \omega^{\mp(\ell+k)} v_k^{\pm}\rlap{\,.}
\end{equation}
Note that the representation $V$ is characterized or labeled by $(\lambda,n,\ell)$ where $n$ is a non-negative integer and $\ell \in \mZ_3 = \mZ/3\mZ$ with $3\mathbb{Z} = \{ 3z \mid z \in \mathbb{Z}\}$ the set of multiples of 3. 

We will show that the set $\mathcal{B}$ spans the $\mathcal{O}\mathrm{S}t_3$ invariant subspace containing $v_0$, which if $V$ is irreducible must be all of $V$. Moreover, in case the $O_0$ eigenvalues are all distinct then $\mathcal{B}$ 
forms a basis for the irreducible representation $V$. 
Hereto, we determine the action of all elements on $\mathcal{B}$.

The explicit action of $G_{23}$ and $G_{31}$ follows from~\eqref{actG123} as
\begin{equation}\label{actG23}
G_{23} v_k^{\pm} =G_{12} G_{123}v_k^{\pm}=  \omega^{\pm(\ell+k)} G_{12}v_k^{\pm}=  \omega^{\pm(\ell+k)} v_k^{\mp}, \qquad G_{31} v_k^{\pm} =G_{12} G_{321}v_k^{\pm}=    \omega^{\mp(\ell+k)}  v_k^{\mp}\,,
\end{equation}
and in turn the action of $N_{\pm}$ as defined by~\eqref{Npm},
\begin{equation}\label{actNp}
N_{+} v_k^{\pm} =(G_{12}+\omega G_{23}+\omega^{2}G_{31})v_k^{\pm}= ( 1+ \omega^{ 1\pm\ell\pm k}+\omega^{-1\mp\ell\mp k})v_k^{\mp} = 3\,\mathbf{1}_{3\mathbb{Z}}(\ell+k\pm 1)v_k^{\mp}\,.
\end{equation}
where we employ the notation, 
\[
\mathbf{1}_{3\mathbb{Z}}(k) = \frac{1+\omega^k+\omega^{-k}}{3} =\begin{cases}
1 & \mbox{if} \quad k \equiv 0 \;(\bmod\; 3) \iff k \in 3\mathbb{Z} \\
0 & \mbox{if} \quad k \equiv 1,2\;(\bmod\; 3) \iff k \not\in 3\mathbb{Z}\,.
\end{cases}
\]
Similarly, we have
\begin{equation}\label{actNm}
N_{-} v_k^{\pm}  = 3\,\mathbf{1}_{3\mathbb{Z}}(\ell+k\mp 1)v_k^{\mp}\,.
\end{equation}
By~\eqref{NpNm2}, we find that the linear combinations of $G_{123}$ and $G_{321}$ denoted by $N_+N_-$ and $N_-N_+$, see~\eqref{NpNm} and \eqref{NmNp}, satisfy the polynomial equation $X^2-9X=0$. Consequently their eigenvalues are 0 and 9. 
Following~\eqref{actNp} and \eqref{actNm}, we obtain the diagonal actions 
\begin{equation}\label{actNN}
N_+N_- v_k^{\pm} =   9\, \mathbf{1}_{3\mathbb{Z}}(\ell+k\mp 1) v_k^{\pm} \,,
\qquad
\mbox{and}
\qquad
N_-N_+ v_k^{\pm} =  9\, \mathbf{1}_{3\mathbb{Z}}(\ell+k\pm 1) v_k^{\pm} \rlap{\,.}
\end{equation}

We already know that $K_- v_k^+ = v_{k+1}^+$ and $K_+ v_k^- = v_{k+1}^-$  with $v_l^{\pm} = 0$ for $l>n$. 
Using \eqref{KpKm} we find the action of $K_+$ and $K_-$ on the rest of the basis $\mathcal{B}$:
\begin{align}
K_+ v_k^+ &= K_+K_- v_{k-1}^+  = -	\big(O_{123}^2+(O_0    -  1/2)^2
\big)
\big((O_0    -  1/2)^2 -\kappa^2 N_{+}N_{-}\big)v_{k-1}^+ \notag\\
&= -	\big(\Lambda^2 + (\lambda-k+1/2)^2
\big)
\big((\lambda-k+1/2)^2 - 9\kappa^2 \mathbf{1}_{3\mathbb{Z}}(\ell+k-2)\big)v_{k-1}^+\,, \label{actKp}
\end{align}	
and similarly
\begin{align}
K_- v_k^-& = K_-K_+ v_{k-1}^-  = -\big(O_{123}^2 + (O_0    +  1/2)^2 \big)
\big((O_0    +  1/2)^2 -\kappa^2 N_{-}N_{+}\big)v_{k-1}^-\notag\\
& =- \big(\Lambda^2
+ (\lambda-k+1/2)^2  \big)
\big((\lambda-k+1/2)^2- 9\kappa^2 \mathbf{1}_{3\mathbb{Z}}(\ell+k-2)  \big)v_{k-1}^-\,.\label{actKm}
\end{align}
For ease of notation, we define the expression $A(k)$ to denote these actions, that is
\begin{equation}\label{A}
A(k) = - \big(\Lambda^2
+ (\lambda-k+1/2)^2  \big)
\big((\lambda-k+1/2)^2- 9\kappa^2 \mathbf{1}_{3\mathbb{Z}}(\ell+k+1)  \big)\,,
\end{equation}
such that 
\begin{align}
\begin{split}\label{actK}
K_+ v_{k}^+ & = A(k)v_{k-1}^+  = K_+K_- v_{k-1}^+\,,\qquad K_- v_{k}^- = A(k)v_{k-1}^- =   K_-K_+ v_{k-1}^-\\
K_+K_- v_{k}^- & =  A(k) K_+ v_{k-1}^-=  A(k)  v_{k}^-\,,\qquad
K_-K_+ v_{k}^+ =  A(k) K_- v_{k-1}^+=  A(k)  v_{k}^+ \,.
\end{split}
\end{align}

For the action of $O_+$ and $O_-$ on $\mathcal{B}$, we set out as follows. Using~\eqref{O0Opm} we have
\begin{align}
K_{\pm} = \frac12\{O_0,O_{\pm}\} = O_{\pm}O_0 + \frac12[O_0,O_{\pm}] =  O_{\pm}O_0 \pm \frac12O_{\pm} + \kappa O_{123}N_{\pm}\,.\label{KpmOpm}
\end{align}
As $K_- v_k^+ = v_{k+1}^+$ for $k\leq n-1$, we find
\[
v_{k+1}^+ = K_- v_k^+ =  O_{-}\Big(O_0  - \frac12\Big) v_k^+  + \kappa O_{123}N_{-} v_k^+ =  (\lambda -k-1/2)O_{-}v_k^+  + 3\kappa \Lambda \mathbf{1}_{3\mathbb{Z}}( \ell+k-1 ) v_k^- .
\]
Hence, for $\lambda -k-1/2\neq 0$ (we will handle the zero case after determining the possible values for $\lambda$)
\begin{equation}\label{actOmp}
O_{-}v_k^+ = \frac{1}{\lambda -k-1/2}v_{k+1}^+  
- \frac{3\kappa \Lambda}{\lambda -k-1/2} \mathbf{1}_{3\mathbb{Z}}( \ell+k-1) v_k^- \,.
\end{equation}
The action of $O_{-}$ on $v_n^+$ is consistent with~\eqref{actOmp} by letting $v_{n+1}^+=0$ as by
\[
0 = K_- v_n^+ =  O_{-}\Big(O_0  - \frac12\Big) v_n^+  + \kappa O_{123}N_{-} v_n^+ =  (\lambda -n-1/2)O_{-}v_n^+  + 3\kappa \Lambda \mathbf{1}_{3\mathbb{Z}}( \ell+n-1) v_n^- 
\]
we have, for $\lambda -n-1/2\neq 0$
\begin{equation*} 
O_{-}v_n^+ =   - \frac{3\kappa \Lambda}{\lambda -n-1/2} \mathbf{1}_{3\mathbb{Z}}(\ell+n-1) v_n^- \,.
\end{equation*}

The action~\eqref{actKm} together with~\eqref{KpmOpm} yields the action of $O_-$ on $v_k^-$. On the one hand
\[
K_- v_k^- =   O_{-}\Big(O_0  - \frac12\Big) v_k^{-}  + \kappa O_{123}N_{-} v_k^{-}  =  -(\lambda - k +1/2)O_{-}v_k^{-}  + 3\kappa \Lambda \mathbf{1}_{3\mathbb{Z}}(\ell+k+1) v_k^{+} \,,
\]
while on the other hand for $k\geq 1$ we have $ K_- v_k^- =K_-K_+ v_{k-1}^-=A(k)v_{k-1}^- $, so for $\lambda -k+1/2\neq 0$
\begin{equation}\label{actOmm}
O_{-}v_k^- = \frac{-A(k)}{\lambda - k +1/2}v_{k-1}^-  + \frac{3\kappa \Lambda}{\lambda - k +1/2} \mathbf{1}_{3\mathbb{Z}}(\ell+k+1) v_k^+ \,.
\end{equation}
For $\lambda\neq1/2$, this is consistent with the action of $O_{-}$ on $v_0^-$ by letting $v_{-1}^-=0$ as
\[
0 = K_- v_{0}^- =  O_{-}\Big(O_0  - \frac12\Big) v_{0}^-  + \kappa O_{123}N_{-} v_{0}^- =\Big(-\lambda-\frac12\Big)O_{-} v_{0}^-  +  3\kappa \Lambda \mathbf{1}_{3\mathbb{Z}}(\ell + 1) v_{0}^+\,.
\]
In a similar way we obtain the action of $O_+$ to be given by
\begin{equation}\label{actOpm}
O_{+}v_k^- = \frac{-1}{\lambda -k-1/2}v_{k+1}^-  + \frac{3\kappa \Lambda}{\lambda -k-1/2} \mathbf{1}_{3\mathbb{Z}}(\ell+k-1) v_k^+ 
\end{equation}
and since  $ K_+ v_k^+ =K_+K_- v_{k-1}^+=A(k)v_{k-1}^+ $
\begin{equation}\label{actOpp}
O_{+}v_k^+ = \frac{A(k)}{\lambda -k+1/2}v_{k-1}^+  - \frac{3\kappa \Lambda}{\lambda -k+1/2} \mathbf{1}_{3\mathbb{Z}}(\ell+k+1) v_k^- \,.
\end{equation}

The actions of all elements of the algebra $\mathcal{O}\mathrm{S}t_3$ are fixed by the four constants $ n,\lambda,\Lambda,\ell$, where $\ell$ is integer and $n$ is a positive integer. We will now examine all possible values which lead to finite irreducible representations. 
The conditions for the dimension to be finite, $K_+ v_0^+ = 0$ and $K_- v_n^+ = 0$ can be combined with the results \eqref{KpKm} and \eqref{KpKm2} of Proposition~\ref{Kpm} to find
\begin{equation*}
\begin{cases}
K_-	K_+ v_0^+ = 0 \\
K_+	K_- v_n^+ = 0
\end{cases} \iff  \begin{cases}
-	\big(O_{123}^2+(O_0    +  1/2)^2 \big)
\big((O_0    +  1/2)^2-\kappa^2 N_{-}N_{+} \big)v_0^+ = 0 \rlap{\,,}\\
-	\big(O_{123}^2+(O_0    -  1/2)^2
\big)
\big((O_0    -  1/2)^2-\kappa^2 N_{+}N_{-} \big)v_n^+ = 0\rlap{\,.}
\end{cases}
\end{equation*}
When plugging in the appropriate actions, \eqref{actO0} and \eqref{actNN}, this yields the system of equations
\begin{equation}\label{syseqs}
\begin{cases}
-\big(\Lambda^2+(\lambda + 1/2)^2
\big)
\big((\lambda + 1/2)^2  -9\kappa^2 \mathbf{1}_{3\mathbb{Z}}(\ell +1)\big) = 0 \\
-\big(\Lambda^2+(\lambda-n    -  1/2)^2 \big)
\big((\lambda-n    -  1/2)^2 -9\kappa^2 \mathbf{1}_{3\mathbb{Z}}(\ell+n-1)\big) = 0
\end{cases}
\end{equation}
to be satisfied for every set of valid values for $ n,\lambda,\Lambda,\ell$. 
We distinguish three distinct classes of solutions of~\eqref{syseqs} depending on which pair of factors are zero and on the value of $\ell$, which decides whether the function $\mathbf{1}_{3\mathbb{Z}}$ is 0 or 1. 
\begin{itemize}
	\item[(a)] $\Lambda^2+(\lambda + 1/2)^2=0$ and $(\lambda-n    -  1/2)^2 -9\kappa^2 =0$, that is when $\mathbf{1}_{3\mathbb{Z}}(\ell+n-1)=1$
	\item[(b)] $\Lambda^2+(\lambda + 1/2)^2=0$ and $(\lambda-n    -  1/2)^2=0$, that is when $\mathbf{1}_{3\mathbb{Z}}(\ell+n-1)=0$
	\item[(c)]	$\Lambda^2+(\lambda + 1/2)^2=0$ and $\Lambda^2+(\lambda-n    -  1/2)^2=0$
\end{itemize}
Note that there are two more cases we have omitted from our classification. We briefly expand on this before continuing. 
First, we have the case where $\Lambda^2+(\lambda-n    -  1/2)^2=0$ and $(\lambda + 1/2)^2-9\kappa^2 =0$, that is when $\mathbf{1}_{3\mathbb{Z}}(\ell +1)=1$. This will turn out to be equivalent to case (a) after renaming $v_k^{-}$ to $v_{n-k}^{+}$ and vice versa, as seen from the action of $O_0$. Similarly, the case where $\Lambda^2+(\lambda-n    -  1/2)^2=0$ and $(\lambda + 1/2)^2=0$ 
will be equivalent with case (b). 

We continue with the classification of all finite-dimensional irreducible representations.  Note that in all three cases $\Lambda^2+(\lambda + 1/2)^2=0$ which fixes the value $\Lambda = \varepsilon \,\ii\, (\lambda +1/2)$, up to a sign $\varepsilon=\pm 1$, and thus the action of $O_{123}$ in function of $\lambda$. This leaves a freedom in the choice of sign of $\Lambda$. In the algebra relations~\eqref{OijOki}, and \eqref{O0Opm}--\eqref{OpOm} by extension, the central element $O_{123}$ is always accompanied by a single factor $\kappa$. We note that when one permits also negative values for $\kappa$, the sign of $\Lambda$ can always be chosen such that
the product $\kappa O_{123}$ has a positive action. 

For each case we need to check whether the vectors~\eqref{basis} are independent. Since $v_0$ is a generating vector of $V$, irreducibility can be checked by verifying that for each $v_k^{\pm}$ there is an algebra element $X_k^{\pm}$ such that $X_k^{\pm}v_k^{\pm}= v_0 $. Note that $(K_+)^kv_k^+=A(k)A(k-1)\dotsm A(1) v_0$ by~\eqref{actK}, while $G_{12}(K_-)^kv_k^-=A(k)A(k-1)\dotsm A(1) G_{12}v_0^-$ and $G_{12}v_0^-=v_0^+=v_0$. Hence, in order to have an irreducible representation, the expression $A(k)$, given by~\eqref{A}, must be non-zero for $k\in\{1,\dotsc,n\}$. 
Plugging in $\Lambda^2=-(\lambda + 1/2)^2$, we find 
\begin{align}
A(k) & = - \big(-(\lambda + 1/2)^2
+ (\lambda-k+1/2)^2  \big)
\big((\lambda-k+1/2)^2- 9\kappa^2 \mathbf{1}_{3\mathbb{Z}}(\ell+k+1)  \big)\notag\\
& =  
k(2\lambda+1-k)  
\big((\lambda-k+1/2)^2- 9\kappa^2 \mathbf{1}_{3\mathbb{Z}}(\ell+k+1)  \big)
\,.\label{Al}
\end{align}
We now work out the explicit value of $\lambda$ and $\ell$ for the three cases.

\subsection{Case (a)}
\label{sec:4.1}

For the case (a) we have $\mathbf{1}_{3\mathbb{Z}}(\ell+n-1)=1$ or thus $\ell \equiv 2n+1 \;(\bmod\; 3) $ which fixes the eigenvalues of the reflections, e.g.~$G_{123} v_k^{\pm} = \omega^{\pm(  2n+1+k )} v_k^{\pm}$, see~\eqref{actG123} and \eqref{actG23}. 
Moreover, from  $(\lambda-n    -  1/2)^2 -9\kappa^2 =0$ we find $\lambda=n\pm3\kappa+1/2$. With unitary representations in mind, we first handle the case $\lambda=n+3\kappa+1/2$. 
The action of $O_0$ on $\mathcal{B}$ is now given by 
\begin{equation}\label{actO0a}
O_0 v_k^{\pm}=\pm(n-k+3\kappa+1/2) v_k^{\pm},\qquad k\in\{0,\dotsc,n\}\,.
\end{equation}
We see that for a positive parameter $\kappa$ every vector of the set $\mathcal{B}$ has a distinct eigenvalue for $O_0$, so the elements of $\mathcal{B}$ are independent. Note that $\lambda -k\pm1/2\neq 0$ for $\kappa>0$ and $k\in\{0,1,\dotsc,n\}$ so the previously determined actions of the elements of $\mathcal{O}\mathrm{S}t_3$ on $\mathcal{B}$, see e.g.~\eqref{actOmp}, are all well-defined. 
Moreover, the expression~\eqref{Al} is readily seen to be non-zero for all positive values $\kappa$ and $k\in\{1,\dotsc,n\}$. 
This shows that, for positive $\kappa$, the set $\mathcal{B}$ forms a basis for the $\mathcal{O}\mathrm{S}t_3$ invariant subspace containing $v_0$, which if $V$ is irreducible must be all of $V$. The actions of the other generators of $\mathcal{O}\mathrm{S}t_3$ are given by~\eqref{actOmp}, \eqref{actOmm}, \eqref{actOpm}, \eqref{actOpp}.

Next, we consider the other choice $\lambda=n-3\kappa+1/2$. 
The $O_0$-eigenvalues~\eqref{actO0} are not necessarily all distinct when
$
6\kappa \in \{1,2,\dotsc,2n+1\}
$. 
Moreover, the condition for irreducibility now leads to disallowed values for $\kappa$,  namely
$
6\kappa \notin \{n+2,n+3,\dotsc,2n+1\}\cup( \{1,\dotsc,n\}\cap 3\mathbb{Z})
$, while also 
$
3\kappa-1 \notin \{0,1,\dotsc,n-1\}\cap 3\mathbb{Z}
$, 
and 
$
3\kappa-2 \notin \{0,1,\dotsc,n-2\}\cap 3\mathbb{Z}
$. 
Hence $\mathcal{B}$ would form a basis for an irreducible $\mathcal{O}\mathrm{S}t_3$ representation if and only if $\kappa$ is not allowed to take on these specific values. As a consequence this choice will not lead to unitary representations for general values of $\kappa$.

Finally, note that the choice $\lambda=n-3\kappa+1/2$ with $\kappa$ positive
is equivalent to considering negative values for $\kappa$ when $\lambda=n+3\kappa+1/2$. 
For a given real value of $\kappa$, the sign accompanying $\kappa$ in $\lambda=n\pm3\kappa+1/2$ can thus always be chosen such that $\lambda$ is positive. For negative $\kappa$, the disallowed values follow immediately by replacing $\kappa$ by $-\kappa$ in the previously obtained conditions. 
These values correspond in fact to those of the $\mathrm{S}_3$ Dunkl operator singular parameter set for which no intertwining operators exist~\cite{Dunkl2,2003_Rosler,Dunkl3}.

\subsection{Case (b)}
\label{sec:4.2}

For the case (b) we have $\mathbf{1}_{3\mathbb{Z}}(\ell+n-1)=0$ or thus $\ell \not\equiv 2n+1 \;(\bmod\; 3) $. This gives two distinct options for the eigenvalues of $G_{123}$ and in turn for the actions of the other reflections.
The condition $(\lambda-n    -  1/2)^2=0$ implies that $\lambda = n+1/2$, which again yields $2n+2$ distinct $O_0$ eigenvalues 
\[
O_0 v_k^{\pm}=\pm(n-k+1/2) v_k^{\pm},\qquad k\in\{0,\dotsc,n\}\,.
\]
For the case at hand the acquired actions~\eqref{actOmp},\eqref{actOmm},\eqref{actOpm},\eqref{actOpp} do not lead to the full action of $O_-$ or $O_+$, as we would have to divide by zero. Indeed, we have $O_0  v_{n}^+ = \frac12 v_{n}^+ $ and $O_0  v_{n}^- = -\frac12 v_{n}^-$ so the denominator in~\eqref{actOmp} would become zero for $k=n$. We determine the action of $O_-$ on $v_n^+$ and $O_+$ on $v_n^-$ in another way. 
By means of relation~\eqref{O0Opm} acting on $v_n^+$ we find
\begin{align*}
(O_0O_{-}-O_{-}O_0)v_{n}^+ &=   - O_{-}v_{n}^+ + 2\kappa O_{123}N_{-}v_{n}^+ \\
\iff \ \, O_0O_{-}v_{n}^+ - \frac12 O_{-}v_{n}^+   &=- O_{-}v_{n}^+
\\
\iff \qquad \qquad\quad O_0O_{-}v_{n}^+    &= - \frac12 O_{-}v_{n}^+  \,,
\end{align*}
which implies $O_- v_{n}^+ = \beta_- v_n^-$ for some constant $\beta_-$. In the same manner we find $O_+ v_{n}^- = \beta_+ v_n^+$ for some constant $\beta_+$. 
Using the interaction of $G_{12}$ and $O_{\pm}$, see~\eqref{GOpm}, we find
\[
\beta_- v_n^- = O_- v_{n}^+ = O_- G_{12} v_{n}^- =-G_{12}O_+  v_{n}^-=-G_{12}\beta_+  v_{n}^+=-\beta_+  v_{n}^-\,,
\]
while by~\eqref{Cas} we have
\begin{align*}
\beta_+\beta_-v_n^+ = O_+O_-v_n^+ = \Big(-	(O_{123})^2-\Big(O_{0}-\frac12 \Big)^2  +\kappa^2N_+N_-  \Big)v_n^+ =- \Lambda^2 v_n^+\,.
\end{align*}
Hence $\beta_- = -\beta_+ = \pm \Lambda = \pm i (n+1)$. Note that we have an extra freedom in the choice of sign, besides the one present for the sign of $\Lambda $.

Finally, we check whether the expression~\eqref{Al} is non-zero for $k\in\{1,\dotsc,n\}$. For $\lambda = n+1/2$, only the factor  
$
(n-k+1)^2- 9\kappa^2 \mathbf{1}_{3\mathbb{Z}}(\ell+k+1) 
$ could become zero. 
Hereto, we distinguish between the two options for $\ell$. 
For $\ell \equiv 2n \;(\bmod\; 3) $ this gives the conditions
\[
(k+2)^2- 9\kappa^2   \neq 0 \qquad\mbox{for } k\in\{0,\dotsc,n-1\}\cap 3\mathbb{Z}\,,
\]
while $\ell \equiv 2n+2 \;(\bmod\; 3) $ leads to
\[
(k+1)^2- 9\kappa^2   \neq 0 \qquad\mbox{for } k\in\{0,\dotsc,n\}\cap 3\mathbb{Z}\,.
\]
This shows that $\mathcal{B}$ forms a basis for an irreducible $\mathcal{O}\mathrm{S}t_3$ representation if and only if $\kappa$ is not allowed to take on some specific values. 

\subsection{Case (c)}
\label{sec:4.3}

As $n$ is positive, the conditions $\Lambda^2+(\lambda + 1/2)^2=0$ and $\Lambda^2+(\lambda-n    -  1/2)^2=0$ lead to $\lambda = n/2$ and $\Lambda^2 = -(n+1)^2/4$. 
In this scenario, the vectors $v_k^{\pm}$ and $v_{n-k}^{\mp}$ have the same $O_0$ eigenvalue:
\[
O_0v_k^{\pm} = \pm\Big(\frac{n}{2} -k\Big)v_{k}^{\pm},\qquad O_0 v_{n-k}^{\mp} = \mp\Big(\frac{n}{2} -(n-k)\Big)v_{n-k}^{\mp}= \pm\Big(\frac{n}{2} -k\Big)v_{n-k}^{\mp}\,.
\]
The $G_{123}$ eigenvalues~\eqref{actG123} for $v_k^{\pm}$ and $v_{n-k}^{\mp}$ are given by
\[
G_{123}v_k^{\pm} = \omega^{\pm(\ell+k)}v_{k}^{\pm},\qquad G_{123} v_{n-k}^{\mp} =  \omega^{\mp(\ell+n-k)}v_{n-k}^{\mp}= \omega^{\mp(n-\ell)} \omega^{\pm(\ell+k)}v_{n-k}^{\mp}\,.
\]
Two different scenarios now occur depending on the value of $\ell$, that is whether $\ell\equiv n\;(\bmod\; 3) $ or not. We distinguish in the first place with respect to the parity of $n$. 

\subsubsection{Even \texorpdfstring{$n$}{n}}
\label{sec:4.3.1}

For $n$ an even integer, $\lambda = n/2$ is an integer so the previously determined actions of the elements of $\mathcal{O}\mathrm{S}t_3$ on $\mathcal{B}$ are all well-defined. When $\ell\equiv n\;(\bmod\; 3) $, the space generated by $v_0$ is comprised of two irreducible components and decomposes as follows. 
The vectors $v_{\lambda}^+$ and $v_{\lambda}^-$ both have $0$ as $O_0$ eigenvalue and $G_{12}v_{\lambda}^+=v_{\lambda}^-$. Hence, defining $u_0^+=v_{\lambda}^+ + v_{\lambda}^-$ and $u_0^-=v_{\lambda}^+ - v_{\lambda}^-$, we have $G_{12}u_0^+=u_0^+$ and $G_{12}u_0^-=-u_0^-$, while $O_0u_0^{\pm}=0$ and furthermore $G_{123}u_0^{\pm}=\omega^{n+\lambda}u_0^{\pm}=u_0^{\pm}$. If we now define $u_{-k}^{\pm} = (K_-)^ku_0^{\pm} $ and $u_{k}^{\pm} = (K_+)^ku_0^{\pm} $ for $k\in\{1,\dotsc,\lambda\}$, then the sets 
\begin{equation*}
\mathcal{B}^+ = \{  u_{k}^+ \mid k=-\lambda,\dotsc,0,\dotsc,\lambda\} \qquad \mathcal{B}^- = \{  u_{k}^- \mid k=-\lambda,\dotsc,0,\dotsc,\lambda\} 
\end{equation*}
each form the basis for an $\mathcal{O}\mathrm{S}t_3$ invariant subspace of dimension $n+1$. We go over the actions on these spaces. We have
$
O_0 u_k^{\pm} = k\,  u_k^{\pm}
$ and 
$G_{123}u_k^{\pm}=\omega^{- k}u_k^{\pm}$. Moreover, 
$G_{12}u_{k}^{\pm} = \pm u_{-k}^{\pm}$, while $G_{23}u_{k}^{\pm}=\pm\omega^{-k}u_{-k}^{\pm}$ and  $G_{31}u_{k}^{\pm}=\pm\omega^{k}u_{-k}^{\pm}$. For positive $k$, we have by definition $K_+u_{k}^{\pm}= u_{k+1}^{\pm}$ and $K_-u_{-k}^{\pm}= u_{-k-1}^{\pm}$. The other actions are found as follows. 
Note that for positive $k$, 
\[
u_{k}^{\pm} = (K_+)^ku_0^{\pm} = (K_+)^k(v_{\lambda}^+ \pm v_{\lambda}^- )= \prod_{l=0}^{k-1} A(\lambda - l) v_{\lambda-k}^+ \pm v_{\lambda+k}^- 
\]
and similarly
\[
u_{-k}^{\pm} = (K_-)^ku_0^{\pm} = (K_-)^k(v_{\lambda}^+ \pm v_{\lambda}^- )=  v_{\lambda+k}^+ \pm \prod_{l=0}^{k-1} A(\lambda - l)v_{\lambda-k}^- \,.
\]
Again for positive $k$, we then find
\[
K_+u_{-k}^{\pm} =K_+ K_- u_{-k+1}^{\pm} =A(\lambda+k) u_{-k+1}^{\pm}\rlap{\,,}
\qquad
K_-u_{k}^{\pm} =K_- K_+ u_{k-1}^{\pm} =A(\lambda+k)u_{k-1}^{\pm}\rlap{\,.}
\]
Here we used $A(\lambda+k) = A(\lambda-k+1)$, which is readily verified from~\eqref{Al} with $\lambda = n/2$ and $\ell\equiv n\;(\bmod\; 3) $. 

We check whether the expression~\eqref{Al} is non-zero for $k\in\{1,\dotsc,n\}$. For $\lambda = n/2$, the only factor of~\eqref{Al} with  $\ell \equiv n  \;(\bmod\; 3) $ that could become zero is
\[
\Big(\frac{n+1}{2}-k\Big)^2- 9\kappa^2 \mathbf{1}_{3\mathbb{Z}}(n+k+1) \,.
\]
This leads to the conditions
\[
\Big(k+\frac{3}{2}\Big)^2- 9\kappa^2     \neq 0 \qquad\mbox{for }  k\in\{-\lambda+2,\dotsc,\lambda+1\}\cap 3\mathbb{Z}\,,
\]
which shows that, except for specific $\kappa$ values, $\mathcal{B}^+$ and $\mathcal{B}^-$ each form the basis for an $\mathcal{O}\mathrm{S}t_3$ invariant space.

If $\ell \not\equiv n \;(\bmod\; 3) $, then $v_k^{\pm}$ and $v_{n-k}^{\mp}$ have different eigenvalues for $G_{123}$. We check whether the expression~\eqref{Al} is non-zero for $k\in\{1,\dotsc,n\}$. For $\lambda = n/2$, the only factor of~\eqref{Al} that could become zero is
\[
\Big(\frac{n+1}{2}-k\Big)^2- 9\kappa^2 \mathbf{1}_{3\mathbb{Z}}(\ell+k+1) \,.
\]
Hereto, we distinguish between the two options for $\ell$. 
For $\ell \equiv n +1 \;(\bmod\; 3) $ this gives the conditions
\[
\Big(k+\frac{1}{2}\Big)^2- 9\kappa^2    \neq 0 \qquad\mbox{for } k\in\{-\lambda,\dotsc,\lambda-1\}\cap 3\mathbb{Z}\,,
\]
while $\ell \equiv n-1 \;(\bmod\; 3) $ also leads to
\[
\Big(k+\frac{1}{2}\Big)^2- 9\kappa^2  \neq 0 \qquad \mbox{for }k\in\{-\lambda,\dotsc,\lambda-1\}\cap 3\mathbb{Z}\,.
\]
This shows that $\mathcal{B}$ forms a basis for an irreducible $\mathcal{O}\mathrm{S}t_3$ representation if and only if $\kappa$ is not allowed to take on some specific values.

\subsubsection{Odd \texorpdfstring{$n$}{n}}
\label{sec:4.3.2}

Next, we consider the case where $n$ is an odd integer. As $\lambda=n/2$ is now a half-integer there exists an integer value $k_0 = \lambda -1/2= (n -1)/2 $ such that 
\[
O_0  v_{k_0}^+ = \frac12 v_{k_0}^+ ,\qquad O_0  v_{k_0+1}^- = \frac12 v_{k_0+1}^-  ,\qquad O_0  v_{k_0+1}^+ = -\frac12 v_{k_0+1}^+  ,\qquad O_0  v_{k_0}^- = -\frac12 v_{k_0}^-  \,, 
\] 
These specific eigenvalues have as a consequence that the acquired actions~\eqref{actOmp},\eqref{actOmm},\eqref{actOpm},\eqref{actOpp} do not lead to the full action of $O_-$ or $O_+$, as we would have to divide by zero. 
Using~\eqref{KpmOpm}, however, we find
\[
v_{k_0+1}^+ = K_- v_{k_0}^+ =  O_{-}\Big(O_0  - \frac12\Big) v_{k_0}^+  + \kappa O_{123}N_{-} v_{k_0}^+ =  3\kappa \Lambda \mathbf{1}_{3\mathbb{Z}}(\ell+ k_0- 1) v_{k_0}^-\,.
\]
Since $k_0< n$, the action $K_- v_{k_0}^+$ may not result in zero by the assumption of minimality on $n$, so we must have $\mathbf{1}_{3\mathbb{Z}}(\ell+ k_0- 1)=1$ or thus $\ell \equiv 2k_0 +1 \equiv n \;(\bmod\; 3) $. It follows that the vectors $v_{k_0+1}^+$ and $ v_{k_0}^-$, which have the same $O_0$ eigenvalue, are not linearly independent as now
$
v_{k_0+1}^+ =   3\kappa \Lambda v_{k_0}^-
$. 
In the same way, we find $
v_{k_0+1}^- =   3\kappa \Lambda v_{k_0}^+
$. By means of these results and the actions~\eqref{actKp} and
\eqref{actKm} of $K_{\pm}$ we obtain that the vector $v_k^-$ is proportional to $v_{n-k}^+$ for every $k\in\{0,1,\dotsc,n\}$. 
Indeed, by~\eqref{actKm} we have for instance
\[
v_{k_0+2}^+ =K_- v_{k_0+1}^+ = 3\kappa \Lambda \,K_-  v_{k_0}^- = -3\kappa \Lambda \big(\Lambda^2
+ 1  \big)
v_{k_0-1}^- \,.
\]
However, acting on $v_{k_0}^+$ with $[O_0,O_-]$, see relation~\eqref{O0Opm}, we find an equation which can never be satisfied unless $v_{k_0+1}^+=0$. Hence, we have no representations for odd $n$ in case (c).

\subsection{Unitary representations} 
\label{sec:unitary}

To find irreducible unitary representations we check which of the irreducible representations admit an invariant positive definite Hermitian form. Hereto, we introduce an 
antilinear antimultiplicative involution $X\mapsto X^{\dagger}$ compatible with the algebraic relations~\eqref{OijOki} of the algebra $\mathcal{O}\mathrm{S}t_3$. This involution has the properties 
$(aX+bY)^{\dagger} = \overline{a} X^{\dagger} + \overline{b} Y^{\dagger}$ and $(XY)^{\dagger} = Y^{\dagger} X^{\dagger}$ for $X,Y \in \mathcal{O}\mathrm{S}t_3$ and $a,b\in\mathbb{C}$, where $\overline{a}$ denotes complex conjugation. 

For real $\kappa$, the algebraic relations~\eqref{OijOki} are compatible with the star conditions 
\begin{equation*}
O_{12}^{\dagger} = - O_{12}\qquad O_{23}^{\dagger} = - O_{23}\qquad O_{31}^{\dagger} = - O_{31}\qquad O_{123}^{\dagger} = - O_{123}\,.
\end{equation*}
and
\begin{equation*}
G_{12}^{\dagger} =  G_{12}\qquad G_{23}^{\dagger} =  G_{23}\qquad G_{31}^{\dagger} =  G_{31}\qquad G_{123}^{\dagger} =  G_{321}\,.
\end{equation*}
\begin{rema}
	Note that the total angular momentum operators $O_{12},O_{23},O_{31}$ become self-adjoint when accompanied by the factor $1/\ii$ we have left out for notational convenience. 
\end{rema}
In terms of Definition~\ref{def1}, this leads to the relations~\eqref{O0Opm}--\eqref{OpOm} being compatible with the star conditions
\begin{equation}\label{star}
O_{123}^{\dagger} = - O_{123}\qquad O_0^{\dagger} = O_0, \qquad  O_{\pm}^{\dagger} = O_{\mp},\qquad K_{\pm}^{\dagger} = K_{\mp},\qquad N_{\pm}^{\dagger} = N_{\mp}\,.
\end{equation}
We show that if the value of $\kappa$ is suitably restricted, the representation $V$ is unitary under~\eqref{star}. 
Hereto, we introduce a sesquilinear form $\langle \cdot,\cdot \rangle \colon V \times V \to \mathbb{C}$ such that for $X\in \mathcal{O}\mathrm{S}t_3$ and $v,w\in V$
\[
\langle X v, w \rangle = \langle v, X^{\dagger} w \rangle\rlap{\,.}
\]
The condition 
$O_0^{\dagger} = O_0$ implies that vectors with different $O_0$ eigenvalues are orthogonal, so the previously determined bases are in fact orthogonal. Hence, we may define the form $\langle\cdot,\cdot\rangle$ by putting

\[
\langle v_k^{+}, v_{l}^{+} \rangle = h_k\,  \delta_{k,l}  \rlap{\,,}\qquad 
\langle v_k^{+}, v_{l}^{-} \rangle = 0 \rlap{\,,}
\]
where we can freely let $h_0=1$ or $\langle v_0^+, v_0^+ \rangle = 1 $. 
Note that 
\[
\langle v_k^{-}, v_{l}^{-} \rangle = \langle G_{12} v_k^{+},  G_{12} v_{l}^{+} \rangle  = \langle G_{12} G_{12} v_k^{+},  v_{l}^{+} \rangle = \langle v_k^{+},   v_{l}^{+} \rangle = h_k\,  \delta_{k,l}\,.
\]

In order to be an inner product we need $h_k>0$ for $k\geq 0$. 
Using the star condition $K_-^{\dagger}= K_+$ and using $K_+v_k^+=A(k)v_{k-1}^+$ with~\eqref{A}, we have for $k\geq 1$ 
\begin{equation}
\label{hk}
h_{k} = \langle v_{k}^+, v_{k}^+ \rangle = \langle K_- v_{k-1}^+,  v_{k}^+ \rangle  = \langle  v_{k-1}^+, K_+ v_{k}^+ \rangle = A(k)  \langle  v_{k-1}^+, v_{k-1}^+ \rangle = A(k) h_{k-1}  \rlap{\,.}
\end{equation}
In this way we arrive at the condition $A(k)>0$ for $1 \leq k\leq n$, which is obviously satisfied for the case (a) with the choice $\lambda=n+3\kappa+1/2$. 
This will constitute the only class of unitary representations without further restrictions on the non-negative parameter $\kappa$. 
For the other choice of case (a), $\lambda=n-3\kappa+1/2$, this only holds when $\kappa$ is restricted to $|\kappa| < 1/3$. 
For the case (b), we have two options for $\ell$, leading to different restrictions on the value of $\kappa$ in order for $A(k)>0$ to hold for $1 \leq k\leq n$. 
If $\ell \equiv 2n \;(\bmod\; 3) $, then $\kappa$ must satisfy $|\kappa| < 2/3$, 
while $\ell \equiv 2n+2 \;(\bmod\; 3) $ implies the condition $|\kappa| < 1/3$. 
For the case (c) with $n$ even we have $|\kappa| < 1/2$ if $\ell \equiv n \;(\bmod\; 3) $ and 
$|\kappa| < 1/6$
if $\ell \not\equiv n \;(\bmod\; 3) $.

Given an inner product we can introduce the orthonormal basis 
\[
w_{k}^{\pm} = \frac{v_{n-k}^{\pm}}{\lVert v_{n-k}^{\pm} \rVert} \qquad (k = 0, 1, \dots , n-1, n)
\]
where $\lVert v_{n-k}^{\pm} \rVert = \sqrt{ \langle  v_{n-k}^{\pm}, v_{n-k}^{\pm} \rangle }  = \sqrt{ h_{n-k} } $. We find using \eqref{hk}
\[
K_- w_{k}^{+} = K_- \frac{v_{n-k}^{+}}{\lVert v_{n-k}^{+} \rVert} 
= \frac{v_{n-k+1}^{+}}{ \sqrt{ h_{n-k} }}  
= \sqrt{ A(n-k+1) } \,w_{k-1}^{+}
\]
and by~\eqref{actK}
\[
K_+ w_{k}^{+}= K_+  \frac{v_{n-k}^{+}}{\lVert v_{n-k}^{+} \rVert}   = A(n-k) \frac{v_{n-k-1}^{+}}{ \sqrt{ h_{n-k} }}  = \sqrt{ A(n-k) } \,w_{k+1}^{+} \rlap{\,,}
\]
while similarly
\[
K_- w_{k}^{-} = K_- \frac{v_{n-k}^{-}}{\lVert v_{n-k}^{-} \rVert} 
= A(n-k)\frac{v_{n-k-1}^{-}}{ \sqrt{ h_{n-k} }}  
= \sqrt{ A(n-k) } \,w_{k+1}^{-}
\]
and
\[
K_+ w_{k}^{-}= K_+  \frac{v_{n-k}^{-}}{\lVert v_{n-k}^{-} \rVert}   =  \frac{v_{n-k+1}^{-}}{ \sqrt{ h_{n-k} }}  = \sqrt{ A(n-k+1) } \,w_{k-1}^{-} \rlap{\,.}
\]

Returning to the case (a), the right-hand side follows from
\begin{equation*}
A(k) = k(2n+6\kappa+2-k)  
\big((n+3\kappa-k+1)^2- 9\kappa^2 \mathbf{1}_{3\mathbb{Z}}(2n+k+2)  \big)\rlap{\,.}
\end{equation*}
We summarize all actions for the case (a) in the following proposition. 

\begin{prop}\label{prop1}
	For a given positive parameter $\kappa$ and a choice of sign $\varepsilon=\pm 1$, we have an irreducible representation of $\mathcal{O}\mathrm{S}t_3$ of dimension $2n+2$ for every non-negative integer $n$. This representation is unitary, corresponding to the star conditions~\eqref{star}.  The actions of the $\mathcal{O}\mathrm{S}t_3$ operators on a set of basis vectors 
	$w_0^+$, $w_1^+$, $\ldots$, $w_n^+$ and $w_0^-$, $w_1^-$, $\ldots$, $w_n^-$ are given by:
	\begin{align}
	& O_0 w_k^{\pm} =  \pm\Big(k  + \frac12+3\kappa\Big) \; w_k^{\pm}\label{act-O0}\\
	& O_{123} w_k^{\pm} =  \varepsilon\, \ii\, (n+1 +3\kappa) \; w_k^{\pm}\label{act-O123}\\
	& K_{+} w_k^{+} =  
	\begin{cases}
	\sqrt{(k+1) (n-k)(n+k+2+6\kappa) 
		(k+1+6\kappa)}  w_{k+ 1}^{+}
	& \text{if }k\equiv 2 \;(\bmod\; 3)\\
	(k+1+3\kappa) \sqrt{(n-k)(n+k+2+6\kappa)}w_{k+ 1}^{+}
	& \text{if }k\not\equiv 2 \;(\bmod\; 3)
	\end{cases} \label{act-K++}\\
	& K_{+} w_k^{-}= 
	\begin{cases}
	\sqrt{ k(n-k+1)(n+k+1+6\kappa)(k+6\kappa)} w_{k- 1}^{+}
	& \qquad\text{if }k\equiv 0 \;(\bmod\; 3)\\
	(k+3\kappa)\sqrt{(n-k+1)(n+k+1+6\kappa) }w_{k- 1}^{+}
	& \qquad\text{if }k\not\equiv 0 \;(\bmod\; 3)
	\end{cases} \label{act-K+-}\\		
	& K_{-} w_k^{+}=
	\begin{cases}
	\sqrt{ k(n-k+1)(n+k+1+6\kappa)(k+6\kappa)} w_{k- 1}^{+}
	& \qquad\text{if }k\equiv 0 \;(\bmod\; 3)\\
	(k+3\kappa)\sqrt{(n-k+1)(n+k+1+6\kappa) } w_{k- 1}^{+}
	& \qquad\text{if }k\not\equiv 0 \;(\bmod\; 3)
	\end{cases} \label{act-K-+}\\
	& K_{-} w_k^{-}=
	\begin{cases}
	\sqrt{ (k+1)(n-k)(n+k+2+6\kappa)(k+1+6\kappa)}  w_{k+ 1}^{-}
	& \text{if }k\equiv 2 \;(\bmod\; 3)\\
	(k+1+3\kappa)\sqrt{(n-k)(n+k+2+6\kappa) }  w_{k+ 1}^{-}
	& \text{if }k\not\equiv 2 \;(\bmod\; 3)
	\end{cases} \label{act-K--}
	\end{align} 	 
	while for $O_+$ and $O_-$ we have the following actions. If $k\equiv 0 \;(\bmod\; 3)$ then 
	\begin{align}
	O_{+} w_k^{+}&=
	\sqrt{ (n-k)(n+k+2+6\kappa)} \; w_{k+ 1}^{+}
	\\
	O_{+} w_k^{-}&=
	-\frac{\sqrt{ k(n-k+1)(n+k+1+6\kappa)(k+6\kappa)}}{k+3\kappa} w_{k- 1}^{-}
	+\varepsilon\, \ii\,\frac{ 3\kappa(n+1+3\kappa)}{k+3\kappa}w_{k}^{+}  
	\\
	O_{-} w_k^{+}&=
	\frac{\sqrt{ k(n-k+1)(n+k+1+6\kappa)(k+6\kappa)}}{k+3\kappa} w_{k- 1}^{+}
	-\varepsilon\, \ii\,\frac{ 3\kappa(n+1+3\kappa)}{k+3\kappa}w_{k}^{-}  
	\\
	O_{-} w_k^{-}&=
	-\sqrt{ (n-k)(n+k+2+6\kappa)} \; w_{k+ 1}^{-}\rlap{\,;}
	\end{align}	
	if $k\equiv 1 \;(\bmod\; 3)$ then
	\begin{align}
	O_{+} w_k^{+}&=
	\sqrt{ (n-k)(n+k+2+6\kappa)} \; w_{k+ 1}^{+}\\
	O_{+} w_k^{-}&=	-\sqrt{(n-k+1)(n+k+1+6\kappa) } \; w_{k- 1}^{-}\\
	O_{-} w_k^{+}&=	\sqrt{(n-k+1)(n+k+1+6\kappa) } \; w_{k- 1}^{+}\\
	O_{-} w_k^{-}&=
	-\sqrt{ (n-k)(n+k+2+6\kappa)} \; w_{k+ 1}^{-}\rlap{\,;}
	\end{align}	
	if $k\equiv 2 \;(\bmod\; 3)$ then 
	\begin{align}
	O_{+} w_k^{+}&=
	\frac{\sqrt{ (k+1)(n-k)(n+k+2+6\kappa)(k+1+6\kappa)}}{k+1+3\kappa} w_{k+ 1}^{+}
	-\varepsilon\, \ii\,\frac{ 3\kappa(n+1+3\kappa)}{k+1+3\kappa}w_{k}^{-}
	\\
	O_{+} w_k^{-}&=	-\sqrt{(n-k+1)(n+k+1+6\kappa) } \; w_{k- 1}^{-}
	\\
	O_{-} w_k^{+}&=	\sqrt{(n-k+1)(n+k+1+6\kappa) } \; w_{k- 1}^{+}
	\\
	O_{-} w_k^{-}&=
	-\frac{\sqrt{ (k+1)(n-k)(n+k+2+6\kappa)(k+1+6\kappa)}}{k+1+3\kappa} w_{k+ 1}^{-}
	+\varepsilon\, \ii\,\frac{ 3\kappa(n+1+3\kappa)}{k+1+3\kappa}w_{k}^{+}\rlap{\,.}
	\end{align}	
	For the realization of $\mathrm{S}_3$ within $\mathcal{O}\mathrm{S}t_3$ we have the actions 
	\begin{align}
	G_{12} w_k^{\pm} 	&=  \; w_k^{\mp}\label{act-G12} &
	G_{23} w_k^{\pm} 	&=\omega^{\pm(1-k)}  \; w_k^{\mp} &
	G_{31} w_k^{\pm} 	&= \omega^{\pm(  k-1 )} \; w_k^{\mp}	\\
	& &	 G_{123} w_k^{\pm} & = \omega^{\pm(1-k)} \; w_k^{\pm}\label{act-G123}&	
	G_{321} w_k^{\pm} &= \omega^{\pm(  k-1 )} \; w_k^{\pm}\rlap{\,.}
	\end{align}	
\end{prop}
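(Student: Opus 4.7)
The strategy is that Proposition~\ref{prop1} is essentially a bookkeeping exercise: all the hard work has already been done in Sections~4.1--4.4, and what remains is to specialize to case~(a) with $\lambda = n+3\kappa+1/2$, translate from the basis $\{v_k^{\pm}\}$ to the orthonormal basis $\{w_k^{\pm}\}$, and simplify.

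First I would fix the parameters once and for all. For case~(a), we have $\Lambda = \varepsilon\, \ii\,(n+1+3\kappa)$ and $\ell\equiv 2n+1\pmod 3$, so that in the expression~\eqref{A} one has
\[
(\lambda - k + \tfrac12)^2 = (n+3\kappa-k+1)^2, \qquad \Lambda^2 + (\lambda-k+\tfrac12)^2 = -(n-k+1)(n+k+1+6\kappa),
\]
and the step function $\mathbf{1}_{3\mathbb{Z}}(\ell+k+1)$ reduces to $\mathbf{1}_{3\mathbb{Z}}(k-n)$, which in the relabeled index becomes a condition on $k\bmod 3$. Substituting into \eqref{Al} yields a closed form for $A(k)$ that splits into two cases according to whether $k\equiv 0\pmod 3$ or not. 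This step is the one where an error is easiest to make, so I would double-check the congruence $\ell \equiv 2n+1\pmod 3$ and the resulting reduction of $\mathbf{1}_{3\mathbb{Z}}$ carefully.

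Next I would perform the change of basis $w_k^{\pm} := v_{n-k}^{\pm}/\sqrt{h_{n-k}}$. The norms $h_k$ satisfy the recursion $h_k = A(k) h_{k-1}$, $h_0=1$, from~\eqref{hk}, which shows that $\sqrt{h_{n-k+1}/h_{n-k}} = \sqrt{A(n-k+1)}$ and $\sqrt{h_{n-k}/h_{n-k-1}} = \sqrt{A(n-k)}$. Combining this with the actions~\eqref{actK}, the four actions of $K_{\pm}$ on $w_k^{\pm}$ stated in \eqref{act-K++}--\eqref{act-K--} follow at once from the explicit expression for $A(n-k+1)$ and $A(n-k)$ computed in the previous step. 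The $O_0$ action~\eqref{act-O0} comes directly from \eqref{actO0} with $\lambda - (n-k) = k + 3\kappa + 1/2$, and~\eqref{act-O123} is just the definition of $\Lambda$. Similarly, the $\mathrm{S}_3$ actions \eqref{act-G12}--\eqref{act-G123} are immediate from \eqref{actG123}--\eqref{actG23} after the reindexing $k\mapsto n-k$ and the substitution $\ell \equiv 2n+1\pmod 3$, which turns $\omega^{\pm(\ell + n - k)}$ into $\omega^{\pm(1-k)}$.

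Finally, for the actions of $O_{\pm}$ I would plug $\lambda = n+3\kappa+1/2$ and $\Lambda = \varepsilon\,\ii\,(n+1+3\kappa)$ into \eqref{actOmp}, \eqref{actOmm}, \eqref{actOpm}, \eqref{actOpp}, reindex via $k\mapsto n-k$, and insert the normalization $\sqrt{h_{n-k}}$. The combinatorics is then driven entirely by the value of $\mathbf{1}_{3\mathbb{Z}}(\ell + (n-k)\pm 1)$, giving three cases according to $k \bmod 3$. The main obstacle is tracking these three congruence classes together with the sign conventions from $v_k^{-}$ versus $v_k^{+}$ and the placement of $\varepsilon\,\ii$, and in particular verifying that the coefficients $3\kappa(n+1+3\kappa)/(k+3\kappa)$ and $3\kappa(n+1+3\kappa)/(k+1+3\kappa)$ arise correctly from the off-diagonal $O_{123}N_{\pm}$ contribution in \eqref{actOmp}, \eqref{actOpm}. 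Once those signs and congruence cases are set down, the remaining simplifications are purely algebraic and reproduce the stated formulas; irreducibility and unitarity have already been established in Sections~4.1 and~\ref{sec:unitary}.
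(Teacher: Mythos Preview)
Your proposal is correct and matches the paper's approach exactly: the paper also presents Proposition~\ref{prop1} as a summary of case~(a), obtained by specializing $\lambda=n+3\kappa+1/2$, $\Lambda=\varepsilon\,\ii\,(n+1+3\kappa)$, $\ell\equiv 2n+1\pmod 3$ in the actions \eqref{actO0}--\eqref{actOpp}, passing to the orthonormal basis $w_k^{\pm}=v_{n-k}^{\pm}/\sqrt{h_{n-k}}$, and using $h_k=A(k)h_{k-1}$ with the explicit $A(k)=k(2n+6\kappa+2-k)\bigl((n+3\kappa-k+1)^2-9\kappa^2\mathbf{1}_{3\mathbb{Z}}(2n+k+2)\bigr)$. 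One small slip: your reduction of $\mathbf{1}_{3\mathbb{Z}}(\ell+k+1)$ should give $\mathbf{1}_{3\mathbb{Z}}(2n+k+2)\equiv\mathbf{1}_{3\mathbb{Z}}(k-n-1)$ rather than $\mathbf{1}_{3\mathbb{Z}}(k-n)$, but since you flagged this as the step to double-check, the overall plan stands.
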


We thought it to be instructive to include a diagram depicting the basis vectors and actions of Proposition~\ref{prop1} according to their eigenvalues for $O_0$ and $G_{123}$, given in Figure~\ref{FigRep}.  Denoting the action~\eqref{act-O0} by $O_0w_k^{\pm} = \pm \lambda_k w_k^{\pm}$, the distance between $\lambda_0$ and $-\lambda_0$ on the horizontal axis is $6\kappa+1$, and thus depends on the value of the parameter $\kappa$.

\section{Explicit realizations}\label{sec:5}

The irreducible unitary representations of case (a) as classified above have an explicit realization in the framework of Dunkl operators~\eqref{Dunkl}. Indeed, in the original construction of the algebra the symmetries $O_{12},O_{23},O_{31}$ consist of Dunkl angular momentum operators with added reflection terms, see~\eqref{Oij}.
When a symmetry (anti)commuting with the Dirac-Dunkl operator $\uD$ acts on an element in the kernel of $\uD$, the result is again in this kernel. Furthermore, as the symmetries $O_{12},O_{23},O_{31}$ are grade-preserving, it is no surprise that homogeneous polynomials of fixed degree in $\ker \uD$ will form the desired representation spaces. 

We will first introduce some notations and definitions. Let $\mathcal{P}_n(\mathbb{R}^N)$ denote the space of homogeneous polynomials of degree $n$ in $N$ variables. The Dunkl monogenics of degree $n$ are homogeneous spinor-valued polynomials of degree $n$ in the kernel of the Dirac-Dunkl operator, which we will denote by $\mathcal{M}_n(\mathbb{R}^N,\mathbb{S}) = \ker\uD \cap (\mathcal{P}_n(\mathbb{R}^N)\otimes \mathbb{S})$. Here $\mathbb{S}$ is a spinor representation of the Clifford algebra. For the three-dimensional Clifford algebra realized by the Pauli matrices, 
a two-dimensional Dirac spinor representation is simply $\mathbb{S}\cong \mathbb{C}^2$, with basis spinors $\chi^+ = (1,0)^T$ and $\chi^- = (0,1)^T$. 

The Dunkl monogenics form eigenspaces of the angular Dirac-Dunkl operator $\Gamma$. 
Indeed, for $\psi_n \in \mathcal{M}_n(\mathbb{R}^3,\mathbb{S})$ we have using $\uD\psi_n = 0$  and~\eqref{Euler}
\begin{align*}
(\Gamma+1) \psi_n &= \frac12(  [\uD, \ux ] -1)\psi_n 
\ = \frac12(  \uD \,\ux   -1)\psi_n \\*
&= \frac12(  \{\uD, \ux \}  -1)\psi_n = \frac12( 2\mE +3 + 6\kappa  -1)\psi_n\rlap{\,.}
\end{align*}
As $\mE \psi_n = n \psi_n$, this gives the following eigenvalues
\begin{equation}\label{eigGamma}
(\Gamma+1) \psi_n  =( n+1 +3\kappa) \psi_n \rlap{\,.}
\end{equation}
Keeping in mind the relation $O_{123} = (\Gamma+1)e_1e_2e_3$, and comparing these eigenvalues with the action~\eqref{act-O123}, this confirms our expectation regarding realizations of the obtained representations. 

Appending a factor $(1+e_0)$ to a Dunkl monogenic $\psi_n\in \mathcal{M}_n(\mathbb{R}^3,\mathbb{S})$, we obtain a (rather trivial) eigen\-function of the Dunkl Dirac Hamiltonian~\eqref{DiracDunklHam}. Indeed, using the anticommutaton relations of the Clifford algebra, we find
\[
H_{\kappa} (1+e_0)\psi_n = \uD(1+e_0)\psi_n + me_0(1+e_0)\psi_n =  (1-e_0)\uD\psi_n + m(e_0+1)\psi_n = m(e_0+1)\psi_n\rlap{\,.}
\]
Note that $(1+e_0)\psi_n$ is no longer an eigenfunction of $O_{123}$, as the latter does not commute, but anticommutes with $H_{\kappa}$.

We will set out to construct a basis for the space of Dunkl monogenics. 
Hereto, it is useful to emulate a setting similar to that of Definition~\ref{def1} and
Proposition~\ref{Result1} by means of a coordinate change:
\begin{equation}\label{coordchange}
\begin{pmatrix}  u \\ v\\ w \end{pmatrix} = 
\begin{pmatrix} 
\frac{1}{\sqrt{2}} & \frac{-1}{\sqrt{2}} &   0 \\ 
\frac{1}{\sqrt{6}} & \frac{1}{\sqrt{6}} & \frac{-2}{\sqrt{6}}\\ 
\frac{1}{\sqrt{3}} & \frac{1}{\sqrt{3}} & \frac{1}{\sqrt{3}} 
\end{pmatrix} 
\begin{pmatrix} x_1 \\ x_2\\ x_3 \end{pmatrix} ,
\qquad 
\begin{pmatrix} x_1 \\ x_2\\ x_3 \end{pmatrix} 
=   \begin{pmatrix} 
\frac{1}{\sqrt{2}} & \frac{1}{\sqrt{6}} & \frac{1}{\sqrt{3}} \\ 
\frac{-1}{\sqrt{2}} & \frac{1}{\sqrt{6}} & \frac{1}{\sqrt{3}}\\ 
0 & \frac{-2}{\sqrt{6}} & \frac{1}{\sqrt{3}} 
\end{pmatrix} 
\begin{pmatrix}  u \\ v\\ w \end{pmatrix} \,.
\end{equation}
The action of $g_{12}$ on functions of $(u,v,w)$ becomes very simple, flipping only the sign of $u$,  $g_{12}f(u,v,w) = f(-u,v,w)$, while the other transpositions $g_{23}$ and $g_{31}$ act as follows
\[
g_{23}f(u,v,w) = f\Big(\frac12 u + \frac{\sqrt{3}}{2} v,\frac{\sqrt{3}}{2} u - \frac12 v,w\Big),
\quad 
g_{31}f(u,v,w) = f\Big(\frac12 u - \frac{\sqrt{3}}{2} v, -\frac{\sqrt{3}}{2}u -  \frac12v,w\Big)\rlap{\,.}
\]
For the Dunkl operators associated to this new coordinate basis we find the following explicit expressions: we have $\cD_w = \partial_w $, while
\[
\cD_u = \partial_u + \kappa \left( \frac{1-g_{12}}{u} + \frac{1-g_{23}}{u-\sqrt{3}v} + \frac{1-g_{31}}{u+\sqrt{3}v}    \right),
\quad
\cD_v= \partial_v + \kappa \left(  \sqrt{3}\frac{1-g_{23}}{-u+\sqrt{3}v} + \sqrt{3}\frac{1-g_{31}}{u+\sqrt{3}v}    \right)\rlap{\,.}
\]
The commutation relations of $\cD_u,\cD_v,\cD_w$ and $u,v,w$ are given in Table~\ref{tab:2}.
\begin{table}[!hb]
	\caption{Commutation relations $\cD_u,\cD_v,\cD_w$ and $u,v,w$.}
	\label{tab:2}       
	\[
	\begin{array}{c|cccc}
	[\downarrow,\rightarrow] & u & v & \qquad w\qquad   \\ \hline 
	\cD_u & 1+\kappa(2g_{12}+\frac12g_{23}+\frac12g_{31})& -\kappa\frac{\sqrt{3}}{2}(g_{23}-g_{31})  & 0  \\
	\cD_v & -\kappa\frac{\sqrt{3}}{2}(g_{23}-g_{31}) & 1+\kappa(\frac32g_{23}+\frac32g_{31}) & 0 \\
	\cD_w  & 0 & 0 & 1 
	\end{array}
	\]
\end{table}
We see that in the coordinate frame of $u,v,w$ the action of the reflection group is restricted to the $(u,v)$-plane.

As $u,v,w$ form again an orthonormal basis of $\mathbb{R}^3$, the Laplace-Dunkl operator~\eqref{Laplace} can also be written as
\[
\Delta = \cD_u^2 +  \cD_v^2 +  \cD_w^2\rlap{\,.}
\]
By applying the same coordinate change~\eqref{coordchange} to the Clifford generators $e_1,e_2,e_3$, that is
\[
e_u =  \frac{1}{\sqrt2} (e_1-e_2),\quad e_v= \frac{1}{\sqrt6} (e_1 + e_2-2e_3),\quad e_w = \frac{1}{\sqrt3} (e_1+e_2+e_3)\rlap{\,,}
\]
the Dirac-Dunkl operator can now be written as
\[
\uD = e_u  \cD_u +  e_v \cD_v + e_w  \cD_w\rlap{\,.}
\]
Similarly, in these new coordinates the vector variable becomes $
\ux = u e_u +  v e_v +  w e_w
$ which squares to $\ux^2 =u^2 +v^2 +w^2$ and the Euler operator is given by
$\mathbb{E} = u\partial_u +v\partial_v +w\partial_w$.
The triple $e_u,e_v,e_w$ forms another basis of the Euclidean Clifford algebra since one readily verifies by means of the anticommutation relations of $e_1,e_2,e_3$ that also
\[
e_u^2 = e_v^2=e_w^2 =1 ,\qquad \{e_u,e_v\}  = \{e_v,e_w\} = \{e_w,e_u\} = 0\rlap{\,.}
\]
For practical purposes, we will realize $e_u,e_v,e_w$ by the Pauli matrices 
\begin{equation}\label{pauli}
e_u = \begin{pmatrix}0& 1 \\ 1 & 0 \end{pmatrix},\qquad e_v = \begin{pmatrix}0& -\ii \\ \ii & 0 \end{pmatrix},\qquad e_w=\begin{pmatrix}1& 0 \\ 0 & -1 \end{pmatrix}\rlap{\,.}
\end{equation}

The generators of the realization of $\mathrm{S}_3$ within $\mathcal{O}\mathrm{S}t_3$ in this framework become
\begin{equation*}
G_{12} = g_{12} e_u\rlap{\,,}\qquad G_{23} = g_{23}\frac12( - e_u+\sqrt{3}e_v)\rlap{\,,}\qquad G_{31} = g_{31}\frac12(- e_u -\sqrt{3}e_v)\rlap{\,,}
\end{equation*}
all of which anticommute with $\uD$. In terms of the Pauli matrices, we have
\begin{equation}\label{Gijuvw}
G_{12} = g_{12} \begin{pmatrix}0& 1 \\ 1 & 0 \end{pmatrix} \rlap{\,,}\qquad 
G_{23} = g_{23}\begin{pmatrix}0&  \omega^2 \\ \omega & 0 \end{pmatrix}\rlap{\,,}\qquad 
G_{31} = g_{31}\begin{pmatrix}0&  \omega \\  \omega^2 & 0 \end{pmatrix}\rlap{\,.}
\end{equation}
Similar to~\eqref{Oij}, in the $u,v,w$ coordinates we obtain the following symmetries commuting with $\uD$:  
\begin{align}
O_{uv} & = u \cD_v - v \cD_u + \frac12 e_ue_v +\kappa e_u e_v (g_{12} + g_{23}+g_{31})\rlap{\,,}\label{Ouv}\\
O_{vw} & = v \cD_w - w \cD_v + \frac12 e_ve_w +\kappa \frac{3}{4} e_v e_w (g_{23}+g_{31}) + \kappa \frac{\sqrt3}{4}e_w e_u(g_{23}-g_{31})\rlap{\,,}\label{Ovw}\\
O_{wu} & = w \cD_u - u \cD_w + \frac12 e_we_u +\kappa e_w e_u g_{12} + \kappa \frac14 e_u e_v(g_{23}+g_{31}) + \kappa \frac{\sqrt3}{4}e_v e_w(g_{23}-g_{31})\rlap{\,.}\label{Owu}
\end{align}
By direct verification after applying the coordinate change~\eqref{coordchange}, the operators of Definition~\ref{def1} now turn out to be
\[
O_0 = -\ii O_{uv},\qquad O_+ = \ii O_{wu} + O_{vw},\qquad  O_- = \ii O_{wu} - O_{vw}\rlap{\,,}
\]
and $N_{\pm}$ follows from the new expressions for the transpositions~\eqref{Gijuvw}, while 
\[
O_{123} =  -\frac12e_ue_ve_w -  \kappa (g_{12}+g_{23}+g_{31})e_ue_ve_w +  O_{uv} e_w  +   O_{vw} e_u + O_{wu}  e_v \rlap{\,.}
\]
The angular Dirac-Dunkl operator $\Gamma$ is again related to $O_{123}$, we have $O_{123}=(\Gamma+1)e_ue_ve_w$.

\subsection{A basis for the space of Dunkl monogenics}
\label{sec:5.1}

Next, we construct the vectors upon which these operators act. As already alluded to, the representation space will consist of Dunkl monogenics, homogeneous polynomials in the kernel of $\uD$. Except for the lowest degree or dimension, finding explicit expressions for a basis of the space of Dunkl monogenics is far from trivial. For an abelian reflection group, as in ref.~\cite{DeBie&Genest&Vinet-2016}, one can single out coordinates and, starting from polynomials on $\mathbb{R}$, gradually work up in dimension by means of Cauchy-Kowalevsky extension maps. 
For a non-abelian reflection group $G$, however, one is not able to single out coordinates at will, as the orbits of the action, or the conjugacy classes, of $G$ are not singleton sets. 
The advantage of the coordinate change~\eqref{coordchange} is that the coordinate $w$ does become invariant under all reflections. This means that for the coordinate $w$ we do in fact have a Cauchy-Kowalevsky extension map (see Proposition~\ref{propCK}) which allows us to move from two-dimensional space to three dimensions. 
On $\mathbb{R}^2$, Dunkl monogenics follow from the expressions for the Dunkl harmonics which were determined already in~\cite{1989_Dunkl_TransAmerMathSoc_311_167}.

When working in $\mathbb{R}^2$ spanned by the coordinates $u$ and $v$, it is useful to have a separate notation for the two-dimensional analogues of the Dirac-Dunkl operator, vector variable and Laplace-Dunkl operator:
\begin{equation}\label{tilde}
\tilde\uD =  e_u \cD_u +  e_v\cD_v \,,
\qquad
\tilde\ux = e_u u  +   e_v v\,,
\qquad
\tilde\Delta = \cD_u^2+\cD_v^2 =\tilde\uD^2,
\qquad
\tilde{\ux}^2 = u^2 +v^2\,.
\end{equation}
They satisfy the (anti)commutation relations, readily verified by means of the relations in Table~\ref{tab:2}, 
\begin{equation}\label{tilderel}
[\tilde\uD,\tilde\ux^2] = 2\tilde\ux,\qquad \{\tilde\uD,\tilde\ux\} = 2(\tilde\mE +1 + 3\kappa),\qquad \tilde\mE = u\partial_u +v\partial_v\rlap{\,,}
\end{equation}
where the Euler operator $\tilde{\mE}$ when acting on a polynomial measures the degree in $u$ and $v$. 

Finally, for the following proposition, the hypergeometric series~\cite{Bailey,Slater} is defined as
\begin{equation}
{}_2F_1 \left( \atop{a,b}{c} ; z \right)=\sum_{k=0}^\infty \frac{(a)_k(b)_k}{(c)_k}\frac{z^k}{k!},
\label{defF}
\end{equation}
where we use the common notation for Pochhammer symbols~\cite{Bailey,Slater}: $(a)_0=1$ and 
$(a)_k=a(a+1)\cdots(a+k-1)$ for $k=1,2,\dotsc$. 

\begin{prop}
	For a non-negative integer $k$, the polynomials $\phi_k^+$ and $\phi^-_k$ defined as
	\begin{equation}\label{phipm}
	\phi_k^{\pm}(u,v) = (u\pm iv)^k \frac{(\kappa+1)_n}{n!}{\;}_2F_1 \left( \atop{-n,\kappa} 
	{-n-\kappa} ; \frac{(-u\pm iv)^3}{(u\pm iv)^3} \right), \qquad n=\lfloor k/3\rfloor
	\end{equation}
	form a basis for the space of Dunkl harmonics $\mathcal{H}_k(\mathbb{R}^2)= \ker\tilde\Delta \cap \mathcal{P}_k(\mathbb{R}^2)$.
\end{prop}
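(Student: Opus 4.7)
The plan is to verify four facts: (1) each $\phi_k^{\pm}$ is genuinely a homogeneous polynomial of degree $k$; (2) $\tilde\Delta\phi_k^{\pm}=0$; (3) $\phi_k^+$ and $\phi_k^-$ are linearly independent for $k\ge 1$; and (4) $\dim\mathcal{H}_k(\mathbb{R}^2)=2$ for $k\ge 1$. Together these give that $\{\phi_k^+,\phi_k^-\}$ is a basis of $\mathcal{H}_k(\mathbb{R}^2)$. Step (1) is immediate: with $n=\lfloor k/3\rfloor$ the upper parameter $-n$ forces the Gauss series to terminate after $n+1$ terms, and writing $k=3n+r$ with $r\in\{0,1,2\}$ the $j$-th summand is proportional to $(u\pm\ii v)^{k-3j}(-u\pm\ii v)^{3j}$, which is a genuine monomial product of total degree~$k$; hence $\phi_k^\pm\in\mathcal{P}_k(\mathbb{R}^2)$.

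For step (3) I would pass to complex coordinates $z=u+\ii v$, $\bar z=u-\ii v$ (so $-u+\ii v=-\bar z$) and rewrite
\[
\phi_k^{+}=\frac{(\kappa+1)_n}{n!}\sum_{j=0}^{n}\frac{(-n)_j(\kappa)_j(-1)^j}{(-n-\kappa)_j\,j!}\,z^{k-3j}\bar z^{3j},
\]
with $\phi_k^-$ obtained by swapping $z\leftrightarrow\bar z$. For $k\ge 1$ the $(z,\bar z)$-bidegree $(k,0)$ component of $\phi_k^+$ is $\frac{(\kappa+1)_n}{n!}z^k\neq 0$, whereas in $\phi_k^-$ the same bidegree $(k,0)$ occurs only through $j=n$ (and only when $3n=k$) with a coefficient not matching the ratio of the $(0,k)$ components; comparing the $(k,0)$ and $(0,k)$ coefficients in any relation $a\phi_k^++b\phi_k^-=0$ forces $a=b=0$. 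For step (4) I would use the surjectivity of $\tilde\Delta\colon\mathcal{P}_k(\mathbb{R}^2)\to\mathcal{P}_{k-2}(\mathbb{R}^2)$, which follows from the $\mathfrak{osp}(1|2)$-type commutation relations~\eqref{tilderel} (by a standard ladder argument using $\tilde\ux^2$ and $\tilde\Delta$), so that $\dim\mathcal{H}_k(\mathbb{R}^2)=(k+1)-(k-1)=2$.

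The heart of the proof, and the main obstacle, is step (2). The clean approach is to write $\phi_k^+=z^k\,F(t)$ with $t=-(\bar z/z)^3$ and $F(t)={}_2F_1(-n,\kappa;-n-\kappa;t)$, and then push $\tilde\Delta$ past $z^k$. Using the formulas for $\cD_u,\cD_v$ given just before Table~\ref{tab:2}, the reflection terms produced by $g_{12},g_{23},g_{31}$ act on a function of $z,\bar z$ through the three linear forms $u$, $u\pm\sqrt 3 v$ whose product is (up to a constant) $\mathrm{Re}\,z^3$; consequently the divided-difference contributions regroup into rational expressions in $z^3$ and $\bar z^3$ which mesh naturally with the argument $t=-\bar z^3/z^3$. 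After this regrouping, $\tilde\Delta\phi_k^+=0$ reduces to the Gauss hypergeometric ODE
\[
t(1-t)F''(t)+\bigl(c-(a+b+1)t\bigr)F'(t)-ab\,F(t)=0
\]
with $(a,b,c)=(-n,\kappa,-n-\kappa)$, which $F$ satisfies by the definition~\eqref{defF}. Harmonicity of $\phi_k^-$ then follows by the symmetry $z\leftrightarrow\bar z$. The main difficulty is the careful bookkeeping of the reflection contributions; this is essentially the dihedral Dunkl-harmonic calculation of~\cite{1989_Dunkl_TransAmerMathSoc_311_167} cast in the present coordinate frame, and once the reflection terms are organized by their action on monomials $z^a\bar z^b$ graded by $(a-b)\bmod 3$, the algebra collapses to the hypergeometric identity.
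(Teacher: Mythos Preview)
Your proposal is correct, but the paper proves step~(2) by a different and considerably shorter route. Rather than pushing the full second-order operator $\tilde\Delta$ through $z^kF(t)$ and reducing to the Gauss hypergeometric ODE, the paper exploits the factorization
\[
\tilde\Delta=\cD_u^2+\cD_v^2=(\cD_u+\ii\cD_v)(\cD_u-\ii\cD_v)
\]
(the two first-order factors commute because the Dunkl operators do) and then simply cites~\cite{1989_Dunkl_TransAmerMathSoc_311_167} for the statement that $(\cD_u+\ii\cD_v)\phi_k^{+}=0$ and $(\cD_u-\ii\cD_v)\phi_k^{-}=0$. Harmonicity is then immediate, with no bookkeeping of reflection terms and no differential equation to verify. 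Your approach has the merit of being self-contained---you actually redo Dunkl's dihedral calculation in these coordinates---whereas the paper treats that calculation as a black box. But if you were to carry out your programme, working with the first-order operator $\cD_u+\ii\cD_v$ instead of $\tilde\Delta$ would halve the work: the reflection contributions then organize into a first-order recurrence for the coefficients of $F$, which is the contiguous relation for $(-n)_j(\kappa)_j/(-n-\kappa)_j$, rather than the full second-order hypergeometric equation.

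For steps~(3) and~(4) the paper is again terser: it simply asserts $\dim\mathcal{H}_k(\mathbb{R}^2)=2$ for $k\ge1$ and notes that $\phi_0^+=1=\phi_0^-$ covers the one-dimensional case $k=0$, without writing out either the bidegree independence argument or the surjectivity-of-$\tilde\Delta$ argument. Your versions of these steps are fine, though you should state the $k=0$ case explicitly since there $\phi_0^+$ and $\phi_0^-$ coincide and the basis has a single element.
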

\begin{proof}On two dimensional space $\mathbb{R}^2$ the Laplace-Dunkl operator $\tilde\Delta$ can be factorized as
	\[ 
	\tilde\Delta = \cD_u^2+\cD_v^2 =(\cD_u+ \ii \cD_v)(\cD_u- \ii \cD_v)   \rlap{\,.}
	\]
	For reflection groups on $\mathbb{R}^2$, the analogues of harmonic polynomials for the Dunkl Laplacian 
	were determined explicitly already in~\cite{1989_Dunkl_TransAmerMathSoc_311_167}.
	The expression~\eqref{phipm} is the hypergeometric form of polynomials satisfying (see~\cite{1989_Dunkl_TransAmerMathSoc_311_167})
	\[ 
	(\cD_u+ \ii \cD_v) \phi_k^{+}(u,v) = 0, \qquad  (\cD_u- \ii \cD_v) \phi_k^{-}(u,v) = 0\rlap{\,,}
	\]
	and hence $  \tilde\Delta\phi_k^{\pm}(u,v) = 0 $. For $k\geq 1$ the dimension of $\mathcal{H}_k(\mathbb{R}^2)$ is 2 so $\phi_k^+$ and $\phi_k^-$ form a basis, while the dimension of $\mathcal{H}_0(\mathbb{R}^2)$ is 1 in accordance with $\phi_0^+ = 1 = \phi_0^-$. 
\end{proof}

Note that the polynomial $\phi_k^{-}$ is simply the complex conjugate of $\phi_k^{+}$. These polynomials can also be written in terms of the Jacobi polynomials~\cite{Koekoek}, which are defined in terms of the hypergeometric series as 
\begin{equation}
\label{jacobi}
P_n^{\alpha,\beta}(x) = \frac{(\alpha+1)_n}{n!}{\;}_2F_1 \left( \atop{-n,n+\alpha+\beta+1} 
{\alpha+1} ; \frac{1-x}{2} \right)\rlap{\,.}
\end{equation}
By means of the identity
\begin{equation}\label{jacobident}
(x+y)^n P_n^{(\alpha,\beta)}\bigg(\frac{x-y}{x+y}\bigg) = \frac{(\alpha+1)_n}{n!} x^n {\;}_2F_1 \left( \atop{-n,-n-\beta} 
{\alpha+1} ; -\frac{y}{x} \right)\rlap{\,,}
\end{equation}
we can write~\eqref{phipm}, denoting $z=u+iv$ and $\overline{z} = u-iv$, as
\begin{equation}\label{jacob}
\phi_k^{+}(u,v) =  (-1)^n z^{k-3n} (z^3+\overline{z}^3)^n P_n^{(-n-\kappa-1,-n-\kappa)} \left( \frac{z^3-\overline{z}^3}{z^3+\overline{z}^3} \right),\qquad n=\lfloor k/3\rfloor\rlap{\,.}
\end{equation}

We use the previous result to obtain spinor-valued polynomials in the kernel of the two-dimensional Dirac-Dunkl operator
$
\tilde\uD =  e_u \cD_u +  e_v\cD_v 
$. 
Recall that for the three-dimensional Clifford algebra realized by the Pauli matrices, 
a two-dimensional Dirac spinor representation is $\mathbb{S}\cong \mathbb{C}^2$, with basis spinors $\chi^+ = (1,0)^T$ and $\chi^- = (0,1)^T$. 

\begin{prop}\label{propMk2} 
	For a non-negative integer $k$, the polynomials
	\begin{equation}\label{varphipm}
	\varphi_k^{+}(u,v) =  	\phi_k^{+}(u,v)\chi^+ \quad\mbox{and}\quad \varphi_k^{-}(u,v) = \phi_k^{-}(u,v)\chi^- 
	\end{equation}
	form a basis for the space $\mathcal{M}_k(\mathbb{R}^2,\mathbb{C}^2)$.
\end{prop}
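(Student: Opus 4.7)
The plan is to exploit the Pauli-matrix realization of $e_u, e_v$, which diagonalizes the two--dimensional Dirac-Dunkl operator in block-off-diagonal form. With $e_u$ and $e_v$ as in~\eqref{pauli}, a direct computation gives
\[
\tilde{\underline{D}} = e_u\cD_u + e_v\cD_v = \begin{pmatrix} 0 & \cD_u - \ii\,\cD_v \\ \cD_u + \ii\,\cD_v & 0 \end{pmatrix},
\]
so that $\tilde{\underline{D}}(f\chi^+) = (\cD_u+\ii\cD_v)f\,\chi^-$ and $\tilde{\underline{D}}(f\chi^-) = (\cD_u-\ii\cD_v)f\,\chi^+$ for any scalar polynomial $f$. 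Thus verifying that $\varphi_k^{\pm}\in\mathcal{M}_k(\mathbb{R}^2,\mathbb{C}^2)$ reduces to the identities $(\cD_u+\ii\cD_v)\phi_k^+ = 0$ and $(\cD_u-\ii\cD_v)\phi_k^- = 0$, which were already recorded in the proof of the previous proposition. Homogeneity of degree $k$ is clear from~\eqref{phipm}, and linear independence of $\varphi_k^+$ and $\varphi_k^-$ is immediate since they live in complementary spinor components.

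The remaining task is to show that $\{\varphi_k^+,\varphi_k^-\}$ spans $\mathcal{M}_k(\mathbb{R}^2,\mathbb{C}^2)$. I will write an arbitrary element as $\psi = f^+\chi^+ + f^-\chi^-$ with $f^\pm\in\mathcal{P}_k(\mathbb{R}^2)$; the equation $\tilde{\underline{D}}\psi=0$ then splits as
\[
(\cD_u+\ii\cD_v)f^+ = 0, \qquad (\cD_u-\ii\cD_v)f^- = 0.
\]
Both equations force $f^\pm\in\mathcal{H}_k(\mathbb{R}^2)$ via the factorization $\tilde{\Delta} = (\cD_u+\ii\cD_v)(\cD_u-\ii\cD_v)$. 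The key step is then to show that each first-order kernel, intersected with $\mathcal{P}_k(\mathbb{R}^2)$, is exactly one-dimensional for $k\ge 1$.

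The main (and essentially only non-trivial) obstacle is this dimension count. I would argue as follows: $\phi_k^+$ lies in $\ker(\cD_u+\ii\cD_v)\cap\mathcal{P}_k$ and $\phi_k^-$ lies in $\ker(\cD_u-\ii\cD_v)\cap\mathcal{P}_k$, and the two kernels cannot coincide, since their common intersection would have to lie in $\ker\tilde{\Delta}\cap\ker(\cD_u-\ii\cD_v)\cap\ker(\cD_u+\ii\cD_v)$, and a quick degree argument (the operators $\cD_u\pm \ii\cD_v$ drop degree by one and, by the commutation relations of Table~\ref{tab:2}, the pair $(\cD_u+\ii\cD_v,\cD_u-\ii\cD_v)$ together with $(u-\ii v,u+\ii v)$ generates an irreducible action on $\mathcal{P}_k$) rules out that a nonzero element of $\mathcal{P}_k$ is annihilated by both factors for $k\ge 1$. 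Since $\dim\mathcal{H}_k(\mathbb{R}^2)=2$ and the two one-dimensional subspaces $\ker(\cD_u\pm\ii\cD_v)\cap\mathcal{P}_k$ are distinct subspaces of $\mathcal{H}_k(\mathbb{R}^2)$, each must be exactly one-dimensional, spanned by $\phi_k^+$ and $\phi_k^-$ respectively. This forces $f^\pm$ to be scalar multiples of $\phi_k^\pm$, and hence $\psi\in\mathrm{span}\{\varphi_k^+,\varphi_k^-\}$.

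For $k=0$ the polynomials reduce to $\varphi_0^\pm = \chi^\pm$, which trivially form a basis of $\mathcal{M}_0(\mathbb{R}^2,\mathbb{C}^2)=\mathbb{C}^2$ since constants are annihilated by $\tilde{\underline{D}}$. This completes the classification: $\dim\mathcal{M}_k(\mathbb{R}^2,\mathbb{C}^2)=2$ for all $k\ge 0$, with basis $\{\varphi_k^+,\varphi_k^-\}$.
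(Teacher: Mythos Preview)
Your argument is correct and follows the same route as the paper: you use the Pauli realization to reduce $\tilde{\underline{D}}\varphi_k^{\pm}=0$ to the scalar identities $(\cD_u\pm\ii\cD_v)\phi_k^{\pm}=0$ already established, exactly as the paper does. The only difference is that the paper simply invokes $\dim\mathcal{M}_k(\mathbb{R}^2,\mathbb{C}^2)=2$ as a known fact (it follows from the Fischer decomposition stated immediately afterwards), whereas you attempt to derive it directly; your extra argument is fine in spirit, though the phrase ``irreducible action on $\mathcal{P}_k$'' is imprecise---the clean statement is that the joint kernel of $\cD_u$ and $\cD_v$ on polynomials consists of constants only (for regular $\kappa$), which immediately gives $\ker(\cD_u+\ii\cD_v)\cap\ker(\cD_u-\ii\cD_v)\cap\mathcal{P}_k=0$ for $k\ge 1$.
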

\begin{proof}
	Acting with
	$
	\tilde\uD = e_u\cD_u + e_v \cD_v
	$	
	on $\varphi_k^{+}$ we find using the Pauli matrices~\eqref{pauli}
	\[
	\tilde\uD\varphi_k^{+}(u,v)= \begin{pmatrix}0& 1 \\ 1 & 0 \end{pmatrix} \cD_u 	\phi_k^{+}(u,v)\begin{pmatrix}1\\0 \end{pmatrix}+  \begin{pmatrix}0& -\ii \\ \ii & 0 \end{pmatrix}\cD_v 	\phi_k^{+}(u,v)\begin{pmatrix}1\\0 \end{pmatrix}=  (\cD_u +\ii \cD_v )	\phi_k^{+}(u,v)\begin{pmatrix}0\\1 \end{pmatrix}
	\]
	which vanishes by definition of $\phi_k^{+}$. In the same way we find
	$
	\tilde\uD  \varphi_k^{-}(u,v) =0$. As the dimension of $\mathcal{M}_k(\mathbb{R}^2,\mathbb{C}^2)$ is 2, $\varphi_k^+$ and $\varphi_k^-$ form a basis.
\end{proof}

For non-negative $\kappa$, there exists a Fischer decomposition for Dunkl monogenics in the sense of the following direct sum decomposition
\[
\cP_n(\mathbb{R}^2)\otimes \mathbb{C}^2 = \bigoplus_{k=0}^n \tilde \ux^{n-k} \mathcal{M}_n(\mathbb{R}^2,\mathbb{C}^2)\rlap{\,.}
\]
Every spinor-valued polynomial on $\mathbb{R}^2$ can thus be written in terms of Dunkl monogenics on $\mathbb{R}^2$, for which a basis is given in Proposition~\ref{propMk2}. The next step consists of moving from $\mathbb{R}^2$ to Dunkl monogenics on $\mathbb{R}^3$ by means of a Cauchy-Kowalevski isomorphism. 

\begin{prop}\label{propCK}
	For a non-negative integer $n$, a basis for the space $\mathcal{M}_n(\mathbb{R}^3,\mathbb{C}^2)$ is given by the $2n+2$ polynomials
	\begin{equation}\label{psink}
	\psi_{n,k}^{\pm}(u,v,w) = \mathbf{CK}_w \big[ \tilde{\ux}^{n-k} \varphi_k^{\pm}(u,v) \big],\qquad k\in\{0,1,\dotsc,n\}
	\end{equation}	
	where the Cauchy-Kowalevski isomorphism is given by
	\begin{equation}\label{CK}
	\mathbf{CK}_w  \colon \cP_n(\mathbb{R}^2)\otimes \mathbb{C}^2  \to  \mathcal{M}_n(\mathbb{R}^3,\mathbb{C}^2)  \colon
	p_n(u,v)  \mapsto  \exp(-we_w\tilde{\uD})p_n(u,v)\rlap{\,.}
	\end{equation}	
	Note that as $p_n(u,v)$  is a polynomial of degree $n$, this reduces to the finite sum
	\[
	\mathbf{CK}_w\big[p_n(u,v) \big] = \exp(-we_w\tilde{\uD})p_n(u,v) = \sum_{a=0}^{n} \frac{(-1)^{a}}{a!}w^a(e_w\tilde{\uD})^a p_n(u,v)\rlap{\,.}
	\]
\end{prop}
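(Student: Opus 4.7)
The proof will proceed in three stages: first show that the map $\mathbf{CK}_w$ really does land inside $\mathcal{M}_n(\mathbb{R}^3,\mathbb{C}^2)$, then show it is injective, and finally count dimensions to conclude that the $2n+2$ polynomials $\psi_{n,k}^\pm$ span the target space. The key algebraic fact driving everything is that $w$ is invariant under the $\mathrm{S}_3$ action, so that $\cD_w = \partial_w$, together with the anticommutation relation $\{e_w,\tilde\uD\}=0$ which follows from $e_w e_u = -e_u e_w$ and $e_w e_v = -e_v e_w$.

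For the first stage, note that the operator $w\,e_w\tilde\uD$ raises the $w$-degree by one and lowers the $(u,v)$-degree by one, so $\mathbf{CK}_w$ preserves total degree; in particular it sends $\cP_n(\mathbb{R}^2)\otimes\mathbb{C}^2$ into $\cP_n(\mathbb{R}^3)\otimes\mathbb{C}^2$. Writing $F=\mathbf{CK}_w(p_n)$, a direct term-by-term differentiation of the exponential gives $\partial_w F = -e_w\tilde\uD F$, so using $e_w^2=1$ one finds $e_w\partial_w F = -\tilde\uD F$, whence
\[
\uD F = \tilde\uD F + e_w\partial_w F = \tilde\uD F - \tilde\uD F = 0.
\]
Thus $F\in\mathcal{M}_n(\mathbb{R}^3,\mathbb{C}^2)$ as required.

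Injectivity of $\mathbf{CK}_w$ is immediate: evaluating the series at $w=0$ returns $p_n(u,v)$, so $\mathbf{CK}_w(p_n)=0$ forces $p_n=0$. For the final stage I invoke the Fischer decomposition in two places. On $\mathbb{R}^2$, the decomposition $\cP_n(\mathbb{R}^2)\otimes\mathbb{C}^2=\bigoplus_{k=0}^n\tilde\ux^{n-k}\mathcal{M}_k(\mathbb{R}^2,\mathbb{C}^2)$ together with Proposition~\ref{propMk2} (which gives the basis $\varphi_k^\pm$ of each $\mathcal{M}_k(\mathbb{R}^2,\mathbb{C}^2)$) guarantees that the $2n+2$ polynomials $\{\tilde\ux^{n-k}\varphi_k^\pm\}_{k=0}^n$ form a basis of $\cP_n(\mathbb{R}^2)\otimes\mathbb{C}^2$. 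On $\mathbb{R}^3$, the Fischer decomposition of $\cP_n(\mathbb{R}^3)\otimes\mathbb{C}^2$ in powers of $\ux$ together with the surjectivity of $\uD$ yields
\[
\dim\mathcal{M}_n(\mathbb{R}^3,\mathbb{C}^2) = 2\binom{n+2}{2}-2\binom{n+1}{2}=2(n+1),
\]
matching the dimension of the source space. Since $\mathbf{CK}_w$ is injective between two spaces of the same finite dimension, it is an isomorphism, and carries the $\mathbb{R}^2$-basis to a basis of $\mathcal{M}_n(\mathbb{R}^3,\mathbb{C}^2)$.

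The main obstacle will not be the kernel computation, which reduces cleanly to the $\{e_w,\tilde\uD\}=0$ identity, but rather the clean invocation of the Fischer decompositions. In particular, the validity of the $\mathbb{R}^3$ Fischer decomposition for the $\mathrm{S}_3$ Dunkl setting relies on the parameter $\kappa$ being non-singular (which is assumed throughout the paper for $\kappa\geq 0$); so long as this is in force, the dimension count goes through and the isomorphism argument closes the proof.
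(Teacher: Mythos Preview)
Your argument is correct and follows essentially the same route as the paper: you verify $\uD\,\mathbf{CK}_w[p_n]=0$ via the identity $\partial_w F=-e_w\tilde\uD F$ (the paper does the equivalent telescoping sum term by term), and you recover injectivity from evaluation at $w=0$, exactly as the paper does. The one genuine difference is in establishing surjectivity: the paper argues directly that evaluation at $w=0$ is injective on $\mathcal{M}_n(\mathbb{R}^3,\mathbb{C}^2)$ (Cauchy--Kowalevsky uniqueness), making it a two-sided inverse, whereas you instead match dimensions using the Fischer decomposition on $\mathbb{R}^3$. Both are standard and valid; your dimension count makes the reliance on non-singular $\kappa$ explicit, while the paper's uniqueness argument is slightly more self-contained since it avoids invoking the three-dimensional Fischer decomposition.
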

\begin{proof}We show that the Cauchy-Kowalevski extension $\mathbf{CK}_w$ maps $\cP_n(\mathbb{R}^2)\otimes \mathbb{C}^2 $ into $\mathcal{M}_n(\mathbb{R}^3,\mathbb{C}^2)$. 
	Let $p_n(u,v)\in \cP_n(\mathbb{R}^2)\otimes \mathbb{C}^2$. Using $\cD_w = \partial_w$ and the commutation relations in Table~\ref{tab:2} we obtain
	\begin{align*}
	\uD  \,\mathbf{CK}_w\big[p_n(u,v) \big] & = (\tilde{\uD} + e_w\partial_w) \sum_{a=0}^{n} \frac{(-1)^{a}}{a!}w^a(e_w\tilde{\uD})^a p_n(u,v)\\
	& =  \sum_{a=0}^{n-1} \frac{(-1)^{a}}{a!}w^ae_w(e_w\tilde{\uD})^{a+1} p_n(u,v)
	+
	\sum_{a=1}^{n} \frac{(-1)^{a}}{(a-1)!}w^{a-1}e_w(e_w\tilde{\uD})^a p_n(u,v)
	\end{align*}
	which clearly vanishes. Hence, as the map~\eqref{CK} preserves the degree of a polynomial we have $\mathbf{CK}_w\big[p_n(u,v) \big] \in \mathcal{M}_n(\mathbb{R}^3,\mathbb{C}^2)$.
	
	The inverse of the isomorphism $\mathbf{CK}_w$ is given by the map which evaluates a function in $w=0$. As the degree of a polynomial in $\mathcal{M}_n(\mathbb{R}^3,\mathbb{C}^2)$ is fixed, this inverse is clearly injective.
\end{proof}
Note that $\psi_{n,k}^{\pm}$ given by~\eqref{psink} can also be written in terms of Jacobi polynomials~\eqref{jacobi} by working out the explicit action of the map~\eqref{CK}. To achieve this, we first state a result, which follows from the commutation relations~\eqref{tilderel}. For $M_k \in\mathcal{M}_k(\mathbb{R}^2,\mathbb{C}^2)$ and non-negative integers $a,b$, 
\begin{equation}\label{ident}
\tilde\uD^{a} \tilde\ux^{b} M_k = d_{a,b}^k\, \tilde\ux^{b-a} M_k\rlap{\,,}
\end{equation}
where $d_{a,b}^k=0$ for $a >b$, and otherwise distinguishing between even and odd $a,b$ one has
\begin{align*}
d_{2\alpha,2\beta}^k &=2^{2\alpha}(-\beta)_\alpha (-\beta-k-3\kappa)_\alpha ,& d_{2\alpha,2\beta+1}^k &= 2^{2\alpha}(-\beta)_\alpha (-\beta-k-1-3\kappa)_\alpha ,\\ 
d_{2\alpha+1,2\beta}^k &= -2^{2\alpha+1} (-\beta)_{\alpha+1} (-\beta-k-3\kappa)_\alpha ,& d_{2\alpha+1,2\beta+1}^k & = - 2^{2\alpha+1}(-\beta)_\alpha (-\beta-k-1-3\kappa)_{\alpha+1}\rlap{\,.}
\end{align*}
Using now in turn the identity~\eqref{ident}, $\tilde \uD e_w = - e_w \tilde \uD$ 
, $(2\alpha)! = 2^{2\alpha} (1)_\alpha (1/2)_\alpha$ and the identity~\eqref{jacobident},  
we obtain
\begin{equation}\label{psiexp}
\psi_{n,k}^{\pm}(u,v,w)  = \Psi_{n-k}(\tilde{\ux},w) \varphi_k^{\pm}(u,v) 
\end{equation}
with $\varphi_k^{\pm}$ given by~\eqref{varphipm} (see also~\eqref{jacob}), and
\begin{align}\label{Psiexp}
& \Psi_{n-k}(\tilde{\ux},w) =    \frac{\beta !}{(\frac12)_\beta} (u^2+v^2+w^2)^{\beta} \\
& \times 
\begin{cases}
P_{\beta}^{(-\frac12,k+3\kappa)}\Big(\frac{u^2+v^2-w^2}{u^2+v^2+w^2}\Big) -\frac{e_ww\tilde{\ux}}{u^2+v^2+w^2} P_{\beta-1}^{(\frac12,k+1+3\kappa)}\Big(\frac{u^2+v^2-w^2}{u^2+v^2+w^2}\Big)  & \mbox{ if } n-k = 2\beta  \rlap{\,,} \\
\tilde{\ux} \, P_{\beta}^{(-\frac12,k+1+3\kappa)}\Big(\frac{u^2+v^2-w^2}{u^2+v^2+w^2}\Big) -e_ww\frac{\beta+k+1+3\kappa}{\beta+\frac12} P_{\beta}^{(\frac12,k+3\kappa)}\Big(\frac{u^2+v^2-w^2}{u^2+v^2+w^2}\Big) & \mbox{ if }  n-k = 2\beta+1 \rlap{\,.}
\end{cases}\notag
\end{align}

\subsection{Representations}
\label{sec:5.2}

Given a non-negative integer $n$, we show that the basis vectors $\psi_{n,k}^{\pm}$ for $k\in\{0,1,\dotsc,n\}$ transform irreducibly under the action of the algebra $\mathcal{O}\mathrm{S}t_3$. As the elements of $\mathcal{O}\mathrm{S}t_3$ (anti)commute with the Dirac-Dunkl operator, the kernel of the Dirac-Dunkl operator is invariant under the action of $\mathcal{O}\mathrm{S}t_3$. Furthermore, the elements of $\mathcal{O}\mathrm{S}t_3$ are grade-preserving so the space $\mathcal{M}_n(\mathbb{R}^3,\mathbb{C}^2)$ is invariant under the action of $\mathcal{O}\mathrm{S}t_3$. 

The spinor $\psi_{n,k}^{\pm}$ corresponds, up to rescaling, precisely to the basis vector
$w_k^{\pm}$ of Proposition~\ref{prop1}. We establish this as follows. 
The two-dimensional vector variable and Dirac-Dunkl operator~\eqref{tilde}
generate another realization of the Lie superalgebra $\mathfrak{osp}(1|2)$. Its Scasimir element, similar to~\eqref{Scasi}, is given by
\[
\tilde\Gamma+1
= \frac12  [\tilde\uD, \tilde\ux ] -\frac12  = \frac12  [\cD_u , u ] + \frac12  [\cD_v , v ] +e_ue_v( v \cD_u-u \cD_v + \frac12  [\cD_u , v ] - \frac12  [\cD_v , u ] )\rlap{\,.}
\]
By means of the commutation relations in Table~\ref{tab:2} we find the explicit form
\[
\tilde\Gamma+1
=\frac12+  \kappa (g_{12}+g_{23}+g_{31}) - e_ue_v  (u \cD_v - v \cD_u)\rlap{\,.}
\]
Comparing with expression~\eqref{Ouv} we observe that 
$\tilde\Gamma+1 =- e_ue_v  O_{uv}$,
and
hence $O_0 = -\ii e_ue_v(\tilde\Gamma+1)$.
Similar to~\eqref{eigGamma}, now using~\eqref{tilderel}
and $ \varphi_k^{\pm}\in\mathcal{M}_k(\mathbb{R}^2,\mathbb{C}^2) = \ker\tilde\uD \cap (\cP_k(\mathbb{R}^2)\otimes \mathbb{C}^2)$ we find
\begin{align*}
(\tilde\Gamma+1) \varphi_k^{\pm}(u,v) &= \frac12(  [\tilde\uD, \tilde\ux ] -1)\varphi_k^{\pm}(u,v) 
\ = \frac12(  \tilde\uD \,\tilde\ux   -1)\varphi_k^{\pm}(u,v) \\*
&= \frac12(  \{\tilde\uD, \tilde\ux \}  -1)\varphi_k^{\pm}(u,v) = \frac12( 2\tilde\mE +2 + 6\kappa  -1)\varphi_k^{\pm}(u,v)
\end{align*}
which, as the Euler operator $\tilde\mE = u\partial_u + v\partial_v$ measures the degree of a polynomial in $u$ and $v$, gives
\[
(\tilde\Gamma+1) \varphi_k^{\pm}(u,v) =\Big( k+\frac12 +3\kappa\Big) \varphi_k^{\pm}(u,v)\rlap{\,.}
\]
Using  $-i e_ue_v \chi^{\pm} = \pm \chi^{\pm}$, which is readily verified using the Pauli matrices~\eqref{pauli}, the action of $O_0$ on $\varphi_k^{\pm}$ then follows to be
\[
O_0 \varphi_k^{\pm}(u,v)=\pm\Big( k+\frac12 +3\kappa\Big)	\varphi_k^{\pm}(u,v)\rlap{\,.}
\]
Since $O_0$ commutes with $\tilde\uD$, $\tilde\ux$ and $e_w w$ we also have, by definition of $\psi_{n,k}^{\pm}$,
\[
O_{0} \psi_{n,k}^{\pm} = \pm  \Big( k+\frac12 +3\kappa\Big) \psi_{n,k}^{\pm}\rlap{\,.}
\]
Finally, as  $O_{123}=(\Gamma+1)e_ue_ve_w$ and
$\psi_{n,k}^{\pm}\in\mathcal{M}_n(\mathbb{R}^3,\mathbb{C}^2)$, by~\eqref{eigGamma} we find the action
\[
O_{123}\psi_{n,k}^{\pm} = \ii (n+1+3\kappa)\psi_{n,k}^{\pm}\rlap{\,.}
\]

To conclude, we consider the action of the $\mathrm{S}_3$ realization on a spinor $\psi_{n,k}^{\pm}$. Using $G_{12}\varphi_k^{\pm}=(-1)^k \varphi_k^{\pm}$, the expressions~\eqref{Gijuvw} and the fact that $G_{12}$ anticommutes with $\tilde{\ux}$ and $\tilde\uD$, we find
\[
G_{12} \psi_{n,k}^{\pm} = (-1)^{n-k}(-1)^k \psi_{n,k}^{\mp} = (-1)^{n} \psi_{n,k}^{\mp}\rlap{\,.}
\] 
Similarly, using now $G_{23}\varphi_k^+=(-1)^k\omega^{\pm(1-k)} \varphi_k^+$ we have
\[
G_{23} \psi_{n,k}^{\pm} = (-1)^{n} \omega^{\pm(1-k)} \psi_{n,k}^{\mp}\rlap{\,.}
\]
This shows, up to rescaling, the correspondence of $\psi_{n,k}^{\pm}$ with the vector $w_k^{\pm}$ of Proposition~\ref{prop1}. 

The abstract inner product on the unitary representation (see section~\ref{sec:unitary}) can now also be realized explicitly. An integral formulation follows by combining the inner product on the spinor space $\mathbb{C}^2$ with the inner product on the unit sphere for Dunkl harmonics~\cite{dunkl2014orthogonal}
\[
\langle \Phi_1, \Phi_2 \rangle = \int_{S^2} ( \Phi_1^{\dagger} \cdot \Phi_2) \, h_{\kappa}^2(u,v,w) \,\mathrm{d}u\mathrm{d}v\mathrm{d}w\rlap{\,,}
\]
where $h_{\kappa}(u,v,w)$ is the $\mathrm{S}_3$ invariant weight function~\cite{dunkl2014orthogonal}
\[
h_{\kappa}(u,v,w) = |u|^{\kappa} | (u^2-3v^2)/4 |^{\kappa}\rlap{\,.}
\]
Using this inner product, the polynomial $\psi_{n,k}^{\pm}$ given by~\eqref{Psiexp} can be normalized to a wavefunction corresponding precisely to the normed vector $w_k^{\pm}$ of Proposition~\ref{prop1}. The orthogonality can be verified by means of the orthogonality relation of the Jacobi polynomials~\cite{Koekoek}.

\section{Conclusion} 
\label{sec:concl}

We presented the symmetry algebra generated by the total angular momentum operators, 
appearing as constants of motion of the $\mathrm{S}_3$ Dunkl Dirac equation. The latter arises as a deformation of the Dirac equation by using Dunkl operators instead of partial derivatives as momentum operators. This corresponds to the addition of a specific potential term to the Dirac Hamiltonian. 
The Dunkl total angular momentum algebra is a one-parameter deformation of the Lie algebra $\mathfrak{so}(3)$ involving reflections. 
We have classified all finite-dimensional irreducible representations of this algebra and we have determined the conditions for the representations to be unitarizable. 
Among the obtained classes of irreducible representations of the symmetry algebra, there is one class of unitary representations for arbitrary positive parameter value. 
This last class admits a natural realization by means of Dunkl monogenics, for which we constructed an explicit basis.

The current results on the symmetry algebra remain to hold when additional potential terms are added to the Hamiltonian. Indeed, the Dunkl total angular momentum operators also commute with functions of the vector variable $\underline{x}=e_1  x_1 +  e_2 x_2 + e_3  x_3$, and thus with a spherically symmetric potential as $\ux^2 = |x|^2$. Furthermore, one may add a deformed spin-orbit interaction term of the form~\eqref{Gamma} and retain the Dunkl total angular momentum components as conserved quantities.


In future work we aim to elevate the setting of the current paper in two directions. On the one hand, one can consider the $N$-dimensional case where the reflection group associated to the Dunkl operator is the symmetric group $\mathrm{S}_N$. On the other hand, it would be interesting to consider more involved root systems (as was done for the type $B_3$ in \cite{GLV}), first in three dimensions and then also in higher dimensions. We look forward to tackle these problems using the insights obtained here.

\section*{Acknowledgments}

	The research of HDB is supported by the Fund for Scientific
	Research-Flanders (FWO-V), project ``Construction of algebra realizations
	using Dirac-operators'', grant G.0116.13N.

\begin{figure}[!htbp]
	\centering
	\includegraphics{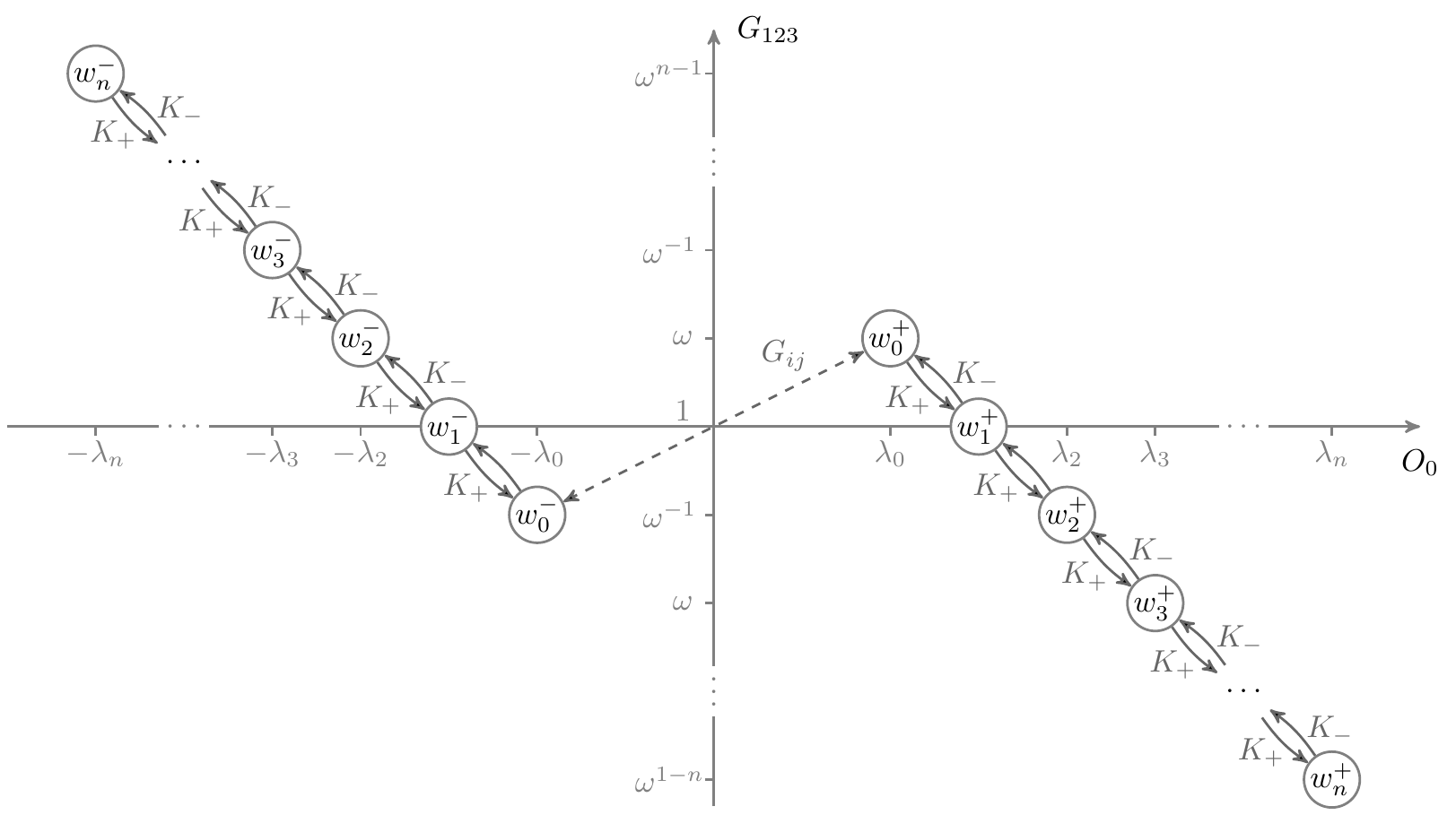}
	\caption{Graphical representation of the basis vectors according to their eigenvalues for $O_0$ and $G_{123}$.
		On the horizontal axis, the shorthand notation $O_0w_k^{\pm} = \pm \lambda_k w_k^{\pm}$ is used, and on the vertical axis the three values $1,\omega,\omega^{-1}$ are repeated periodically. There are two main actions: 1)
		The arrows represent the actions of $K_+$ and $K_-$ through which one moves between the vectors in one half of the vector space. 
		2) In this picture, the action of an odd element of $\mathrm{S}_3$ 
		corresponds to a reflection through the origin, as illustrated for $w_0^+$ and $w_0^-$ by the dashed line. The action of $O_{\pm}$ is a combination of the two main actions.}
	\label{FigRep}
\end{figure}
\end{document}